\newtheorem{theorem}{Theorem}[section] 
\newtheorem{corollary}[theorem]{Corollary} 
\newtheorem{prop}[theorem]{Proposition} 
\newtheorem{lemma}[theorem]{Lemma}
\newenvironment{proof}{{\bf Proof:\ }}{\hfill $\Box$\linebreak\vskip2mm}
\def\tol{{\sf tol}}
\let\meet=\wedge  
\let\join=\vee  
\let\tm=\times
\def\oa{\ov a}  
\def\ob{\ov b}
\let\ov=\overline  
\def\zA{{\mathbb A}}  
\def\zB{{\mathbb B}}
\def\cC{{\cal C}}
\def\cG{{\cal G}}
\def\cP{{\cal P}}
\def\gA{{\mathfrak A}}
\def\ba{{\bf a}}  
\def\bb{{\bf b}}  
\def\bc{{\bf c}}  
\def\bd{{\bf d}}  
\def\be{{\bf e}}
\def\bs{{\bf s}}
\def\bx{{\bf x}}
\def\CSP{{\rm CSP}}
\def\rel{{R}}
\def\relo{{S}}
\let\sse=\subseteq  
\def\ang#1{\langle #1 \rangle}  
\def\vc#1#2{#1 _1\zd #1 _{#2}}  
\def\tc#1#2{#1 _1\td #1 _{#2}}  
\def\zd{,\ldots,}  
\def\td{\tm\ldots\tm}
\def\red#1{\vrule height7pt depth4pt width.4pt  
     \lower3pt\hbox{$\scriptstyle #1$}}  
\def\fac#1{/\lower4pt\hbox{$#1$}}
\def\cl#1#2{\arraycolsep0pt 
\left(\begin{array}{c} #1\\ #2 \end{array}\right)}
\def\pr{{\rm pr}}
\def\lev{{\sf lev}}
\def\summ{{\sf sum}}
\def\lb{$\linebreak$}
\let\al=\alpha  
\let\dl=\delta
\let\vf=\varphi
\let\Gm=\Gamma
\font\tengoth=eufm10 scaled 1200  
\font\sixgoth=eufm6  
\font\tenbur=msbm10  
\font\eightbur=msbm8  
\font\twelvebur=msbm10 scaled 1200  
\mathchardef\nat="0B4E  
\mathchardef\eps="0D3F  
\begin{document} 
 

\title{Conservative constraint satisfaction re-revisited} 
\author{Andrei A.\ Bulatov\thanks{This research is supported by an NSERC Discovery Grant.}
}
\affil{School of Computing Science, Simon Fraser University\\
Email: abulatov@sfu.ca}
\date{}
\maketitle

\begin{abstract}
Conservative constraint satisfaction problems (CSPs) constitute an important 
particular case of the general CSP, in which the allowed values of each variable
can be restricted in an arbitrary way. Problems of this type are well studied 
for graph homomorphisms. A dichotomy theorem characterizing 
conservative CSPs solvable in polynomial time and proving that the remaining
ones are NP-complete was proved by Bulatov in 2003.
Its proof, however, is quite long and technical. A shorter proof of this result 
based on the absorbing subuniverses technique was suggested by Barto in 2011. 
In this paper we give a short elementary prove of the dichotomy theorem for 
the conservative CSP. 
\end{abstract}

\section{Introduction}

In a constraint satisfaction problem (CSP) the aim is to find an
assignment of values to a given set of variables, subject to
specified constraints. The CSP is known to be
NP-complete in general. However, certain
restrictions on the form of the allowed constraints can lead to problems
solvable in polynomial time. Such restrictions are usually imposed by
specifying a constraint language,
that is, a set of relations that are allowed to be used as
constraints. A principal research direction aims
to distinguish those constraint languages that give rise to 
CSPs solvable in polynomial time from those that do not. 
The dichotomy conjecture \cite{Feder98:monotone} 
suggests that every constraint language gives rise to a CSP that
is either solvable in polynomial time or is NP-complete. The dichotomy
conjecture is confirmed in a variety of particular cases 
\cite{Barto11:conservative,Bulatov11:conservative,Bulatov02:3-element,%
Bulatov03:conservative,Bulatov06:3-element,Hell90:h-coloring,Schaefer78:complexity}, 
but the general problem remains open.

One of the important versions of the CSP is often referred to as the
conservative or list CSP. In a CSP of this type the set of values for 
each individual variable can be restricted
arbitrarily. Restrictions of this type can be studied by considering
those constraint languages which contain all possible unary constraints;
such languages are also called conservative. Conservative 
CSPs have been intensively studied for languages consisting of only one 
binary symmetric relation, that is, graphs; in this case CSP is 
equivalent to the graph homomorphism problem 
\cite{Feder98:list,Feder99:list,Feder99:bi-arc,Hell90:h-coloring,Kratochvil94:algorithmic}. 

In \cite{Bulatov11:conservative,Bulatov03:conservative} the dichotomy 
conjecture was confirmed for conservative CSPs. However, the proof 
given in \cite{Bulatov11:conservative,Bulatov03:conservative} 
is quite long and technical, which prompted attempts to find a simpler
argument. In \cite{Barto11:conservative} Barto gave a simpler proof 
using the absorbing 
subuniverses techniques. In the present paper we give another, more elementary,
proof that applies the reduction suggested in \cite{Maroti10:tree}.

As in the majority of dichotomy results the solution
algorithm and the proofs heavily use the algebraic approach to the CSP
developed in \cite{Bulatov03:multisorted,
Bulatov05:classifying,Jeavons97:closure,Jeavons98:algebraic}. 
This approach relates a constraint language to a collection of 
polymorphisms of the language, that is, operations on the same set 
that preserves all the relations from the language, and uses 
polymorphisms of specific 
types to identify constraint languages solvable in polynomial time.
For example, to characterize CSPs on a 2-element set solvable
in polynomial time \cite{Schaefer78:complexity} it suffices to consider only 
4 types of operations on a 2-element set: constant, semilattice 
(conjunction and disjunction), majority 
($(x\meet y)\join(y\meet z)\join(z\meet x)$), and affine ($x+y+z$). 
The same types of operations characterize the complexity of 
conservative CSPs, except that constant operations cannot be 
polymorphisms of conservative languages. In a simplified form 
the main result we prove is 
\begin{theorem}[\cite{Bulatov11:conservative,Bulatov03:conservative}]\label{the:main-intro}
Let $\Gm$ be a constraint language on a set $A$. The conservative
CSP using relations from $\Gm$ can be solved in polynomial time
if and only if for any 2-element subset $\{a,b\}\sse A$ there is an
operation $f$ on $A$, a polymorphism of $\Gm$, such that $f$ on 
$\{a,b\}$ is either a semilattice operation, or a majority operation, 
or an affine operation. Otherwise this CSP is NP-complete.
\end{theorem}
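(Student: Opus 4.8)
\medskip
\noindent\textbf{Proof plan.}

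The ``only if'' direction is the easy one. Suppose that for some 2-element subset $\{a,b\}\sse A$ no polymorphism of $\Gm$ acts on $\{a,b\}$ as a semilattice, majority, or affine operation. Since $\Gm$ is conservative it contains the unary relation $\{a,b\}$, so restricting every relation of $\Gm$ to $\{a,b\}$ produces a constraint language $\Gm'$ on the two-element set $\{a,b\}$, and $\CSP(\Gm')$ reduces to the conservative CSP over $\Gm$ in logarithmic space (just impose the unary constraint $\{a,b\}$ on every variable). As $\Gm'$ is conservative, $\Pol(\Gm')$ is idempotent; by Post's description of the clones on a two-element set, an idempotent clone not consisting of projections alone contains one of $\meet$, $\join$, a majority operation, or $x+y+z$, contradicting the assumption. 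Hence $\Pol(\Gm')$ consists of projections only, and by Schaefer's theorem~\cite{Schaefer78:complexity} the problem $\CSP(\Gm')$ --- and with it the conservative CSP over $\Gm$ --- is NP-complete.

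The substance of the theorem is the ``if'' direction. Via the algebraic approach~\cite{Jeavons98:algebraic,Bulatov05:classifying} we pass to the finite conservative idempotent algebra $\zA=(A;\Pol(\Gm))$ and must show that $\CSP(\zA)$ is solvable in polynomial time. The first task is a uniformisation: from the pairwise hypothesis one extracts, after suitable compositions of polymorphisms, a finite set of term operations of $\zA$ witnessing a colouring of the two-element subsets of $A$ --- a subset is coloured \emph{red} if it admits a semilattice term, and otherwise \emph{yellow} or \emph{blue} according to whether it admits a majority or an affine term, a fixed priority settling the cases in which several conditions hold. One then records the local meaning of the colours: a red pair is a two-element semilattice and hence carries a direction; the yellow part obeys a near-unanimity behaviour; and each maximal ``blue-connected'' piece of a domain is a coset of a module over a finite ring, on which the affine term acts as $x-y+z$. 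These are the conservative incarnations of the standard facts that semilattice operations give constraint propagation, near-unanimity operations give bounded width, and Mal'tsev operations give Gaussian elimination; one also checks that the hypothesis is equivalent to the condition $\NS$, that no relevant factor of $\zA$ is a two-element G-set.

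The algorithm then follows the reduction of~\cite{Maroti10:tree}: a Gaussian-elimination routine for the blue (affine) part, placed underneath a bounded-width routine for everything else. Given an instance, one first enforces a fixed level of local consistency, say $(2,3)$-consistency, which is always sound. If after the pruning some domain still contains a red or a yellow pair, the semilattice direction --- respectively the majority term --- lets one shrink that domain, by deleting an element or factoring out a congruence, without changing solvability; this decreases a size parameter of the instance, and one repeats. When no non-blue pair survives, every domain is a single blue module-piece, the instance is affine in each coordinate, and Gaussian elimination over the associated rings either returns a solution or certifies that none exists. Each step is polynomial and there are polynomially many of them, so the whole procedure runs in polynomial time.

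The crux --- and the step I expect to be the main obstacle --- is proving the structural interaction lemmas that guarantee a reduction step is always available: that a locally consistent instance carrying a non-blue pair in some domain cannot persist without yielding a genuine domain reduction, which is where the ``bad'' mixed configurations (the conservative analogue of Bulatov's forbidden coloured triangles) must be ruled out; and that the blue module structure is preserved by the reductions performed on the red and yellow parts. Carrying this out with elementary manipulations of polymorphisms, in place of the absorption machinery of~\cite{Barto11:conservative}, is where the real work lies.
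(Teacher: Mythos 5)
Your hardness direction and the uniformisation step are fine and match the paper (Schaefer/Post on the two-element restriction; the red/yellow/blue colouring is exactly Proposition~\ref{pro:uniformity}). The genuine gap is in the algorithmic core. Your plan is: enforce $(2,3)$-consistency, shrink any domain that still carries a red or a yellow pair, and finish with Gaussian elimination on purely blue domains. Two of these steps do not work as stated. First, a majority (yellow) pair gives no licence to delete an element or factor out a congruence; no such reduction is proved or known, and in every existing proof (Bulatov, Barto, and this paper) the terminal case is not ``all blue'' but \emph{semilattice-free}, i.e.\ domains in which majority and affine edges coexist. The paper handles that case by observing that $m(x,y,z)=h(g(x,y,z),g(y,z,x),g(z,x,y))$ is a generalized majority--minority operation and invoking Dalmau's algorithm \cite{Dalmau06:majority-minority}; local consistency plus domain shrinking cannot substitute for this, since affine constraints are not of bounded width and majority pairs cannot simply be excised. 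Second, even an all-blue domain is not a module: a conservative algebra with at least three elements is never closed under a global $x-y+z$, so there are no ``associated rings'' and no Gaussian elimination --- the affine behaviour exists only pairwise, which is precisely why the GMM/few-subpowers machinery is needed.

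Beyond that, the reductions that do the real work in the paper are absent from your plan. Red (semilattice) edges are eliminated by Maroti's reduction applied to the operation $\cdot$ (the constructions $p(\cP)$, $t(\cP)$, $c(\cP)$ of Section~\ref{sec:maroti}), not by a generic ``the semilattice direction shrinks the domain'' step; and the other reduction is the as-component exclusion of Section~\ref{sec:as-exclusion}, which rests on the connectivity and rectangularity lemmas for affine-semilattice components (Lemmas~\ref{lem:connectivity} and~\ref{lem:rectangularity}) and on the 3-minimality-based extension of consistent collections (Proposition~\ref{pro:consistent-components}), in order either to glue strand solutions or to delete an entire as-component from the affected domains. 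You explicitly defer ``the structural interaction lemmas'' as the expected main obstacle, but those lemmas (or the as-component analysis that replaces them here) \emph{are} the theorem; as written, the proposal is a programme whose decisive steps are either missing or, in the case of yellow-pair shrinking and blue Gaussian elimination, would fail.
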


We give a new nearly complete proof of Theorem~\ref{the:main-intro}. The only 
statements we reuse in this paper is Proposition~\ref{pro:uniformity} 
that we borrow from \cite{Bulatov11:conservative} and the results of
Setion~\ref{sec:maroti}.

\section{Definitions and preliminaries}

\subsection{Constraint satisfaction problems and algebra}

By $[n]$ we denote the set $\{1\zd n\}$.
Let $\vc An$ be sets, any element of $\tc An$ is an ($n$-ary) 
tuple. Tuples will be denoted in boldface, say, $\ba$, and the $i$th 
component of $\ba$ will be referred to as $\ba[i]$. An $n$-ary relation over 
$\vc An$ is any set of tuples over these sets. For a set $I=\{\vc ik\}\sse[n]$, 
a tuple $\ba\in\tc An$, and a relation $\rel\sse\tc An$, by $\pr_I\ba$
we denote the tuple $(\ba[i_1]\zd\ba[i_k])$, the \emph{projection} of 
$\ba$ on $I$, and $\pr_I\rel=\{\pr_I\bb\mid\bb\in\rel\}$ denotes the 
projection of $\rel$ on $I$. Relation $\rel$ is said to be a \emph{subdirect 
product} of $\vc An$ if $\pr_i\rel=A_i$ for all $i\in[n]$. Let $I\sse[n]$.
For $\ba\in\pr_I\rel$ and $\bb\in\pr_{[n]-I}\rel$ by $(\ba,\bb)$ 
we denote the tuple $\bc$ such that $\bc[i]=\ba[i]$ if $i\in I$ and 
$\bc[i]=\bb[i]$ otherwise.

Let $\gA$ be a collection of finite sets (in this paper we assume 
$\gA$ to be finite as well). A \emph{constraint satisfaction problem} 
over $\gA$ is a triple $(V,\dl,\cC)$, where $V$ is a (finite) set of
\emph{variables}, $\dl$ is a \emph{domain function}, $\dl:V\to\gA$
assigning a domain of values to every variable, and $\cC$ is a set
of \emph{constraints}. Every constraint is a pair $\ang{\bs,\rel}$,
where $\bs=(\vc vk)$ is a sequence of variables from $V$ (possibly 
with repetitions) called the \emph{constraint scope}, and $\rel$ is 
a relation over $\dl(v_1)\td\dl(v_k)$ called the \emph{constraint 
relation}. A mapping $\vf:V\to\bigcup\gA$ that maps every variable
$v$ to its domain $\dl(v)$ is called a \emph{solution} if for every 
$\ang{\bs,\rel}\in\cC$ we have $\vf(\bs)\in\rel$.

Let $W\sse V$. A \emph{partial solution} of $\cP$ on $W$ is a
mapping $\vf:W\to\bigcup\gA$ such that for every constraint
$\ang{\bs,\rel}\in\cC$, $\bs=(\vc vk)$, we have $\vf(\bs')\in\pr_I\rel$, 
where $I=\{\vc i\ell\}$ is the set of indices $i_s$ from $[k]$ such that
$v_{i_s}\in W$, and $\bs'=(v_{i_1}\zd v_{i_\ell})$. The set of all partial 
solutions on  set $W$ is denoted by $\relo_W$. Problem $\cP$ 
is said to be \emph{3-minimal} if it contains a constraint $\ang{W,\relo_W}$
for every 3-element $W\sse V$, and for any $W_1,W_2\sse V$
such that $|W_1|=|W_2|=3$ and $|W_1\cap W_2|=2$, 
$\pr_{W_1\cap W_2}\relo_W=\pr_{W_1\cap W_2}\relo_{W_1}
\cap \pr_{W_1\cap W_2}\relo_{W_2}$. There are standard polynomial 
time \emph{propagation} algorithms (see, e.g.\ \cite{Dechter03:processing}) 
to convert any CSP to an equivalent, that is, having the same
solutions, 3-minimal CSP.

An introduction into universal algebra and the algebraic approach 
to CSP can be found in 
\cite{Burris81:universal,Bulatov03:multisorted,Bulatov05:classifying,Bulatov11:conservative}. 
Here we only mention several key points. For an algebra $\zA$ its
universe will be denoted by $A$. Let $\gA$
be a finite collection of finite similar algebras. For a basic or term 
operation $f$ of the class $\gA$ by $f^\zA$, $\zA\in\gA$, we
denote the interpretation of $f$ in $\zA$. Let $\vc\zA k\in\gA$. 
A relation 
$\rel\sse \tc Ak$ is a \emph{subalgebra} of the direct product $\tc\zA k$, 
denoted $\rel\le\tc\zA k$, if for any
basic operation $f$ (say, it is $n$-ary) of $\gA$ and any 
$\vc\ba n\in\rel$ the tuple $f(\vc\ba k)=(f^{\zA_1}(\ba_1[1]\zd
\ba_n[1])\zd f^{\zA_k}(\ba_1[k]\zd\ba_n[k]))$ belongs to $\rel$.
In this case $f$ is also said to be a \emph{polymorphism} of $\rel$.

By $\CSP(\gA)$ we
denote the class of CSP problems $\cP=(V,\dl,\cC)$ such that
$\dl(v)$ is the universe of one of the members of $\gA$, and
every constraint relation is a subalgebra of the direct product 
of the domain algebras. In this paper we assume that the 
algebras from $\gA$ satisfy certain requirements. 
An algebra is said to be \emph{conservative}
if every subset of its universe is a subalgebra. We only consider
classes of conservative algebras.  Also,  the class $\gA$ will be
assumed to be closed under subalgeras. That is, if $\zA\in\gA$ 
then every subalgebra of $\zA$ also belongs to $\gA$. By 
\cite{Bulatov03:multisorted,Bulatov05:classifying}, for any
finite $\gA$ the problem $\CSP(\gA)$ has the same complexity
as $\gA'$, where $\gA'$ is obtained from $\gA$ by adding all
the subalgebras of algebras from $\gA$. 
A \emph{unary polynomial} of an algebra 
$\zA$ is a mapping $p:\zA\to\zA$, for which there exists a term
operation $t(x,\vc yk)$ and elements $\vc ak\in\zA$ such that
$p(x)=t(x,\vc ak)$. Unary polynomial $p(x)$ is idempotent if
$p(p(x))=p(x)$. The \emph{retract} of $\zA$ via polynomial $p(x)$
is the algebra $p(\zA)$ with the universe $p(A)$, where $A$ is
the universe of $\zA$ and term operations $p(t)$, where
$t(\vc xk)$ is a term operation of $\zA$ and $p(t)(\vc ak)=p(t(\vc ak))$
for any $\vc ak\in p(A)$. We will additionally assume that class 
$\gA$ is closed under retracts. This however does not impose
any additional restrictions in the case of conservative algebras,
since, as is easily seen, every retract of a conservative algebra is a subalgebra.

A subalgebra $\zA$ of a direct product of algebras $\tc\zA n$ is said to be 
a \emph{subdirect product} if the universe of $\zA$ viewed as a relation
is a subdirect product of the universes of $\vc\zA n$. For a congruence $\al$ of 
algebra $\zA$ and element $a$ by $\zA\fac\al$ we denote the 
factor-algebra of $\zA$ and by $a^\al$ the block of $\al$ containing
$a$.

\subsection{Graphs, paths, and the three basic operations}

If $\gA$ is a class of conservative algebras closed under subalgebras, 
then every subalgebra $\zB$ of any $\zA\in\gA$ belongs to
$\gA$. Therefore, by \cite{Schaefer78:complexity}, if $\CSP(\gA)$ is 
polynomial time solvable then, for any 2-element
subalgebra $\zB$ of $\zA$ (we assume $\zB=\{0,1\}$), there exists a term
operation $f_\zB$ of $\gA$ such that $f^\zB_\zB$ is one of the operations
yielding the tractability of the CSP on a 2-element set:  $f^\zB_ \zB$ is either a
\emph{semilattice} (that is conjunction or disjunction) operation, or the
\emph{majority} operation $(x\vee y)\wedge(y\vee z)\wedge(z\vee x)$, or the
\emph{affine} operation $x-y+z\!\!\!\pmod2$. Note that the constant
operations are not in this list since $\Gm$ is conservative. In 
\cite{Bulatov03:conservative,Bulatov11:conservative} it was 
proved that this property is also sufficient for
the tractability of $\CSP(\gA)$.

\begin{theorem}[\cite{Bulatov03:conservative,Bulatov11:conservative}]\label{the:main}
Let $\gA$ be a finite class of conservative algebras. The problem
$\CSP(\gA)$ can be solved in polynomial time if and only if for any
$\zA\in\gA$ and any 2-element subalgebra $\zB$ of $\zA$ there is
a term operation $f_\zB$ of $\gA$ such that $f^\zB_\zB$ is either
semilattice, or majority, or affine. Otherwise $\CSP(\gA)$ is NP-complete.
\end{theorem}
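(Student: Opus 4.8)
The plan is to prove the two directions separately. The "only if" direction is essentially the content of the paragraph preceding the statement: if $\CSP(\gA)$ is tractable, then by the closure of $\gA$ under subalgebras every 2-element subalgebra $\zB$ gives rise to a 2-element constraint language, and by Schaefer's theorem \cite{Schaefer78:complexity} each such language must admit a semilattice, majority, or affine polymorphism (constants being excluded since the algebras are conservative). So the real work is the "if" direction: assuming that on every 2-element subalgebra $\zB\le\zA$ some term operation restricts to semilattice/majority/affine, we must exhibit a polynomial-time algorithm for $\CSP(\gA)$.

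For the algorithmic direction, I would proceed as follows. First, replace an arbitrary instance by an equivalent 3-minimal instance using the standard propagation algorithms, so that we may assume every 3-element (and 2-element) set of variables carries its relation of partial solutions, consistently. Next — and this is where the reduction of \cite{Maroti10:tree} referenced in the introduction enters, via the promised results of Section~\ref{sec:maroti} — I would reduce the general problem to instances in which all domains have been "trimmed" so that no domain contains a proper absorbing-type or semilattice-type subuniverse that could be used to shrink it; equivalently, I would analyze the structure of the binary relations $\relo_{\{u,v\}}$ between pairs of variables. The key combinatorial object is, for each pair $\{a,b\}$ inside a domain, the type (semilattice, majority, or affine) of the witnessing operation; following \cite{Bulatov11:conservative} one uses Proposition~\ref{pro:uniformity} (explicitly flagged as borrowed) to guarantee that these local types can be chosen \emph{uniformly} — i.e., one can find a single operation, or a coherent family of operations, that simultaneously induces the prescribed type on every 2-element subalgebra. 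This uniformity is what lets the local 2-element data be glued into a global algorithm.

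With uniformity in hand, the algorithm splits according to which types occur. If some pair supports a semilattice operation, one uses it to propagate values and either force a variable to a single value or reduce its domain, shrinking the instance; iterating handles all "semilattice edges." Once no semilattice edges remain, the surviving structure on each domain is governed solely by majority and affine pairs: the majority pairs yield a bounded-width (2-consistency-implies-global-consistency) behavior, handled by the 3-minimality already established, while the affine pairs are handled by Gaussian elimination over the relevant $\zZ_2$-module structure, the two being combined in the standard "few subpowers / bounded width + linear" fashion. The main obstacle, and the step I expect to consume most of the argument, is precisely establishing the uniformity/coherence of the local operations across all domains and all pairs simultaneously and showing that the three regimes interact cleanly under 3-minimality — that is, that no "mixed" configuration of semilattice, majority, and affine edges around a single variable or triangle obstructs the reduction. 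The elementary nature claimed in the abstract should come from carrying this out directly on the relations $\relo_W$ via the Maroti-style reduction, rather than through the absorption machinery of \cite{Barto11:conservative}.
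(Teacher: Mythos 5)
Your ``only if'' direction matches the paper (Schaefer on the 2-element subalgebras, constants excluded by conservativity), and your top-level architecture --- reduce until the instance is semilattice free, then use majority/affine machinery --- is the right shape. But the core of the tractability proof is missing, and the step you do propose in its place would fail. You say that when a semilattice edge is present one ``uses it to propagate values and either force a variable to a single value or reduce its domain.'' Restricting a variable toward the top of a semilattice edge is not sound: the instance restricted to the larger elements may have no solution while the original does, so naive propagation can destroy all solutions. The paper's actual mechanism is the structure theory of \emph{as-components} of the graphs $\cG(\dl(v))$ (Lemmas~\ref{lem:as-subdirect}--\ref{lem:connectivity} and the rectangularity Lemma~\ref{lem:rectangularity}), the existence of \emph{consistent collections} of as-components under 3-minimality (Proposition~\ref{pro:consistent-components}), and the resulting as-component exclusion reduction of Section~\ref{sec:as-exclusion}: one recursively solves the strand-decomposed subproblems whose domains are as-components, glues their solutions by rectangularity, and only when these subproblems fail is an entire as-component removed from the domains. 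None of this (nor any equivalent safe domain-reduction rule) appears in your sketch, and it is precisely the part that cannot be replaced by ``uniformity plus propagation.'' Relatedly, you locate the main difficulty in the uniformity of the local operations, but the paper explicitly borrows Proposition~\ref{pro:uniformity} and spends essentially no effort there; the work is in the as-component lemmas and in bounding the recursion depth when the as-component exclusion step is interleaved with Maroti's reduction (the constructions $p(\cP)$, $t(\cP)$, $c(\cP)$ of Section~\ref{sec:maroti}), which you invoke only as a black box for ``trimming.''

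Two smaller inaccuracies: Maroti's reduction is applied only when every domain is already an as-component but still contains semilattice edges, not as a general preprocessing trim; and the semilattice-free case is not ``majority handled by 3-minimality, affine by Gaussian elimination, combined in the standard fashion'' --- 3-minimality alone does not suffice once affine edges are present. The paper instead observes that $m(x,y,z)=h(g(x,y,z),g(y,z,x),g(z,x,y))$ is a generalized majority-minority operation on every semilattice-free domain and invokes the algorithm of \cite{Dalmau06:majority-minority}; your ``few subpowers'' remark points in that direction, but the explicit GMM witness is what makes the step legitimate.
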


Let $\gA$ be a finite class of conservative algebras closed under subalgebras 
that satisfies the conditions of Theorem~\ref{the:main}. For every $\zA\in\gA$, 
we consider the graph  
$\cG_\gA(\zA)$, an edge-labeled digraph with vertex set $A$. 
An edge $(a,b)$ exists and is labeled {\em semilattice}
if there is a term operation $f_{a,b}$ of $\gA$ such that
$f^\zA_{a,b}\red{\{a,b\}}$ is a semilattice operation with
$f^\zA_{a,b}(a,b)=f^\zA_{a,b}(b,a)=f^\zA_{a,b}(b,b)=b$, $f^\zA_{a,b}(a,a)=a$.  
Edges $(a,b),(b,a)$ exist and are labeled {\em majority} if neither
$(a,b)$ nor $(b,a)$ is semilattice and there is a term operation
$f_{a,b}$ such that $f^\zA_{a,b}\red{\{a,b\}}$ is a
majority operation. 
Edges $(a,b),(b,a)$ exist and are labeled {\em affine} if none of
them is semilattice or majority, and there is a term operation
$f_{a,b}$ such that $f^\zA_{a,b}\red{\{a,b\}}$ is an affine
operation. Thus, for each pair $a,b\in A$, either $(a,b)$ or $(b,a)$
is an edge of $\cG_\gA(\zA)$; if $(a,b)$ is a majority or affine edge then
$(b,a)$ is also an edge with the same label; while if $(a,b)$ is semilattice
then the edge $(b,a)$ may not exist. Since $\gA$ is usually fixed, 
we shall use $\cG(\zA)$ rather than $\cG_\gA(\zA)$. 
The operations of the form $f_{a,b}$ can be considerably unified. 
%
\begin{prop}\label{pro:uniformity}
There are term operations $f(x,y),g(x,y,z),h(x,y,z)$ of $\gA$ such
that, for every $\zA\in\gA$ and every two-element subset $B\sse A$, 
\begin{itemize}
\item
$f^\zA\red B$ is a semilattice operation whenever $B$ is semilattice, and
$f^\zA\red B(x,y)=x$ otherwise; 
\item
$g^\zA\red B$ is a majority operation if $B$ is majority,
$g^\zA\red B(x,y,z)=x$ if $B$ is affine, and $g^\zA\red
B(x,y,z)=f^\zA\red B(f^\zA\red B(x,y),z)$ if $B$ is semilattice;
\item
$h^\zA\red B$ is an affine operation if $B$ is affine,
$h^\zA\red B(x,y,z)=x$ if $B$ is majority, and $h^\zA\red
B(x,y,z)=f^\zA\red B(f^\zA\red B(x,y),z)$ if $B$ is semilattice.
\end{itemize}
There is also a term operation $p(x,y)$ such that $p^\zA\red B=f^\zA\red B$ if
$B$ is semilattice, $p^\zA\red B(x,y)=y$ if $B$ is majority, and $p^\zA\red B(x,y)=x$ if
$B$ is affine.
\end{prop}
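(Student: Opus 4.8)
The plan is to build the four global operations $f,g,h,p$ one by one, each time starting from the "local" operations $f_{a,b}$ that witness the edges of the graphs $\cG(\zA)$ and then iteratively patching them so that they simultaneously do the right thing on every two-element subset of every algebra in $\gA$. The crucial feature making this possible is conservativity together with the fact that $\gA$ is finite: there are only finitely many pairs $(\zA,B)$ with $\zA\in\gA$ and $B$ a two-element subalgebra, so we only need to satisfy finitely many constraints, and we can enumerate them.

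First I would construct $f$. For each semilattice pair $(\zA,B)$, $B=\{a,b\}$ with $a\to b$ the semilattice edge, fix a term $t_{a,b}$ with $t^\zA_{a,b}\red B$ the corresponding semilattice operation. Consider the binary term $f_0(x,y)=x$ and then, running through the semilattice pairs in some order, replace the current candidate $s(x,y)$ by the composition $s'(x,y)=s(t_{a,b}(x,y),t_{a,b}(x,y))$ — wait, more carefully: by a standard "stacking" argument one shows that composing the local semilattice terms in the right pattern produces a single binary term that restricts to the semilattice operation on each semilattice $B$ and to a projection on each non-semilattice $B$. The key point is that on a two-element set an idempotent binary operation is either a projection, a semilattice operation, or the operation returning the "other" element; conservativity forces $f^\zA\red B$ into this list, and by choosing the composition order one can kill the bad cases. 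Concretely one can use the classical fact (see e.g. the commutative-idempotent-semigroup reduction) that for a finite family of two-element semilattices there is a single term that is semilattice on all of them; applied here with the extra "projection elsewhere" requirement, which is automatic since on a non-semilattice $B$ every idempotent binary term restriction that is not a semilattice must be a projection or the swap, and one arranges the first argument to dominate.

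Next, with $f$ in hand I would construct $g$ and $h$. On semilattice $B$ the prescription is forced: $g^\zA\red B(x,y,z)=f^\zA\red B(f^\zA\red B(x,y),z)$, so I start from the term $f(f(x,y),z)$, which already satisfies the semilattice clause and is a projection onto $x$ on affine and majority $B$ (since there $f$ is first projection). Now I need to further compose with the local majority witnesses $g_{a,b}$ to upgrade the behaviour on majority pairs to a genuine majority operation, while not disturbing the semilattice pairs (where $f$-compositions stay within semilattice operations) and not disturbing the affine pairs (where a majority witness restricted to an affine $B$ must be a projection, since the only idempotent ternary operations on a two-element set are projections, majority, minority, and the three "$2$-of-$3$ with a twist" ones — and the ones available as polymorphisms of an affine pair that are not themselves affine are projections). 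The same scheme, with the roles of majority and affine swapped, yields $h$. Finally $p$ is built the same way: on semilattice $B$ it must equal $f$, on majority $B$ it is second projection, on affine $B$ it is first projection; start from $f$ (correct on semilattice, first projection elsewhere) and patch the majority pairs using witnesses that restrict to second projection there.

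The main obstacle is the simultaneity: a single composition must hit the target on \emph{all} pairs $(\zA,B)$ at once, and a patch fixing one pair could ruin another. I expect the cleanest way around this is the "iterate and stabilize" argument — order the finitely many pairs, and show that composing local witnesses in that order monotonically fixes pairs without ever un-fixing an already-fixed one. This monotonicity is exactly where conservativity and the two-element classification do the work: once $f^\zA\red B$ is a semilattice operation, any further composition of the form dictated by the construction stays a semilattice operation (the clone generated by one semilattice operation on two elements contains only it and the two projections, and the construction keeps us among semilattice outputs); once $g^\zA\red B$ is majority on a majority $B$, composing with a term that is a projection on $B$ keeps it majority or turns it into a projection, and the ordering is chosen so the majority witness for $B$ is applied last among the steps that touch $B$. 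Verifying that such a global ordering exists — essentially that the dependency relation among the finitely many patching steps is acyclic, which follows because each step only "cares about" the pair it is designed for — is the technical heart, and I would spell it out as the one nonroutine lemma, everything else being the two-element bookkeeping sketched above.
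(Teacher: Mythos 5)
You should note first that this paper contains no proof of Proposition~\ref{pro:uniformity}: it is explicitly imported from \cite{Bulatov11:conservative}, so the only comparison possible is with the proof there. Your overall strategy --- iteratively composing the local witnesses $f_{a,b}$, $g_{a,b}$, $h_{a,b}$ so that pairs already brought to the desired form are never spoiled --- is indeed the strategy of that proof, but as written your argument has real gaps. For the binary operations $f$ and $p$ the construction can be made to work, but not for the reasons you give: the facts you need are (i) on a two-element subalgebra the conservative binary operations are exactly the two projections and the two semilattice operations (your ``operation returning the other element'' does not exist, being non-conservative), and (ii) by the very definition of the edge labels, on a majority or affine pair no term operation is semilattice, so every binary term restricts there to a projection. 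With these, the patching step is explicit --- e.g.\ replace the current term $s$ by $t_{a,b}(s(x,y),s(y,x))$, which fixes the new pair and leaves every already-semilattice pair untouched because the two inner arguments coincide there, and finish with a substitution such as $f_0(f_0(x,y),x)$ to convert second projections into first projections --- so the ``acyclic ordering of patches'' you flag as the technical heart is not needed at all; conversely, the composition you actually begin to write down is abandoned mid-sentence and never fixed.

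The serious gap is in the ternary case. Your construction of $g$ (and symmetrically $h$) rests on the claim that a majority witness restricted to an affine pair ``must be a projection'', justified by a classification of idempotent ternary operations on a two-element set into projections, majority, minority and three others. That classification is false (there are $2^6=64$ idempotent ternary operations on a two-element set), and the claim itself is unjustified: the restriction of $g_{a,b}$ to an affine pair $B'$ need not be a projection --- it can be the minority operation or many other idempotent operations; all one knows is that it cannot generate a semilattice or majority operation on $B'$. Producing a single term that is simultaneously majority on every majority pair, \emph{first projection} on every affine pair, and the prescribed $f$-composition on every semilattice pair is exactly the nontrivial content of the proposition, and it is here that the proof in \cite{Bulatov11:conservative} does its case-by-case compositional work; your proposal replaces this with an incorrect two-element classification plus an unproved ``iterate and stabilize'' lemma that you yourself defer. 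As it stands the text is a plausible plan for the binary part and an incomplete, partly incorrect sketch for the ternary part, not a proof.
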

%

Using Proposition~\ref{pro:uniformity} we may assume that all algebras 
in $\gA$ have only three basic operations. We will normally use 
$\cdot$ instead of $f$. Operation $\cdot$ acts non-symmetrically on semilattice edges. 
This means that every such edge $ab$ is oriented: $ab$ is oriented from 
$a$ to $b$ if $a\cdot b=b\cdot a=b$; in this case we also write $a\le b$. 
Therefore $\cG(\zA)$ is treated as a digraph, in which semilattice edges are 
oriented, while majority and affine ones are not.

For a relation $\rel\le\zA_1\tm\ldots\tm\zA_n$ a digraph $\cG(\rel)$ can be 
defined in a natural way: tuples $\ba,\bb\in\rel$ form a semilattice edge directed 
from $\ba$ to $\bb$ if $\ba[i]=\bb[i]$ or $\ba[i]\bb[i]$ is a semilattice edge 
directed from $\ba[i]$ to $\bb[i]$ for every $i\in[n]$; tuples $\ba,\bb$ form 
a majority edge if $\ba[i]=\bb[i]$ or $\ba[i]\bb[i]$ is majority for each $i\in [n]$; 
and $\ba,\bb$ form an affine edge, if $\ba[i]=\bb[i]$ or $\ba[i]\bb[i]$ is 
affine for every $i\in[n]$. As is easily seen, graph
$\cG(\rel)$ is usually not complete, but as we shall see it inherits many
properties of the graph $\cG(\zA)$ of a conservative algebra $\zA$.

A sequence of vertices $\vc\ba k$ of $\cG(\rel)$ is a \emph{path}
if every $\ba_i\ba_{i+1}$ is either a semilattice or affine edge.

\begin{lemma}\label{lem:path-extension}
Let $\rel\le\zA_1\tm\ldots\tm\zA_n$, $I\sse[n]$, and let $\vc\ba k$
be a path in $\cG(\pr_I\rel)$. There are $\vc\bb k\in\pr_{[n]-I}\rel$
such that $(\ba_1,\bb_1)\zd(\ba_k,\bb_k)$ is a path in $\rel$.
\end{lemma}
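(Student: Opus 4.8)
The plan is to reduce to the case of a single new coordinate and then lift the path one edge at a time, by induction on its length.

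\emph{Reduction.} Write $[n]-I=\{j_1\zd j_m\}$; it suffices to treat $m=1$, since applying that case to $\pr_{I\cup\{j_1\}}\rel$ turns $\vc\ba k$ into a path in $\cG(\pr_{I\cup\{j_1\}}\rel)$, to which it applies again with $I\cup\{j_1\}$ and $j_2$, and so on until all of $[n]-I$ has been added (the first coordinates never change). So assume $I=[n-1]$, and for a path vertex $\ba_i$ put $\rel_i=\{b\in A_n\mid(\ba_i,b)\in\rel\}$. Since $\zA_1\tm\ldots\tm\zA_{n-1}$ is conservative, $\{\ba_i\}\tm\zA_n$ is a subalgebra, so $\rel_i\le\zA_n$, and $\rel_i\neq\emptyset$ because $\ba_i\in\pr_{[n-1]}\rel$.

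\emph{Lifting one edge.} Suppose $(\ba_i,b_i)\in\rel$ and $\ba_i\ba_{i+1}$ is an edge of $\cG(\pr_{[n-1]}\rel)$; we seek $b_{i+1}\in\rel_{i+1}$ so that $(\ba_i,b_i)(\ba_{i+1},b_{i+1})$ is an edge of $\cG(\rel)$ of the same kind. If $\ba_i\ba_{i+1}$ is semilattice then $\ba_i\cdot\ba_{i+1}=\ba_{i+1}$, so for any $c\in\rel_{i+1}$ we get $(\ba_i,b_i)\cdot(\ba_{i+1},c)=(\ba_{i+1},b_i\cdot c)\in\rel$; setting $b_{i+1}=b_i\cdot c$, a short case check (on whether $b_i=c$, or $\{b_i,c\}$ is a semilattice edge, or a majority or affine one) gives $b_i\le b_{i+1}$. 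If $\ba_i\ba_{i+1}$ is affine then $h(\ba_i,\ba_i,\ba_{i+1})=\ba_{i+1}$, so $(\ba_{i+1},h(b_i,b_i,c))\in\rel$; setting $b_{i+1}=h(b_i,b_i,c)$, the same case check shows $(\ba_i,b_i)(\ba_{i+1},b_{i+1})$ is an affine edge in every case except when $\{b_i,c\}$ is a semilattice edge directed from $b_i$ to $c$ — and then $b_{i+1}=c$, making the new edge semilattice. If some $c\in\rel_{i+1}$ avoids that exceptional case we are done and $b_i$ is unchanged; otherwise every $c\in\rel_{i+1}$ lies strictly above $b_i$ in the semilattice order, whence $(\ba_i,b_i)\cdot(\ba_{i+1},c)=(\ba_i,c)$ gives $c\in\rel_i$, so $\rel_{i+1}\sse\rel_i$ and $\rel_{i+1}$ lies strictly above $b_i$; in this situation no $b_{i+1}$ can work while $b_i$ is kept fixed.

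\emph{Assembling the path; the main obstacle.} Induct on $k$: given a lift of $\vc\ba i$, append $\ba_{i+1}$ using the step above. Semilattice edges, and affine edges outside the exceptional case, leave $\vc b i$ untouched, so the induction closes immediately. In the remaining (degenerate affine) case one takes $c^\ast\in\rel_{i+1}\sse\rel_i$, sets $b_{i+1}=c^\ast$, and must then produce a lift of $\vc\ba i$ whose last coordinate is $c^\ast$ — and this is the step I expect to be the real obstacle, because semilattice reachability in $\zA_n$ need not be transitive, so one cannot simply claim the existing lift of $\vc\ba{i-1}$ still connects to $c^\ast$; the correction has to be propagated leftward through the already-built path, raising each intermediate $b_j$ along a semilattice edge or re-applying $h$ across an affine edge. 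To make this well founded I would strengthen the induction hypothesis, carrying the claim that the last coordinate of a prefix lift may be taken to be any prescribed element of the relevant slice lying above a value already attained (with $b_i\le c^\ast$ providing the hook), and verify that this strengthened form survives the single-edge extension. The one-edge computations above are routine; keeping this backward correction under control is the only genuinely delicate point.
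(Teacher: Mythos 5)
Your reduction to a single new coordinate and your one-edge computations are fine, and they essentially reproduce the paper's construction: the paper takes an arbitrary extension $\bc_{i+1}$ of $\ba_{i+1}$ and sets $\bb_{i+1}=\bb_i\cdot p(\bc_{i+1},\bb_i)$ across a semilattice edge and $\bb_{i+1}=p(\bc_{i+1},\bb_i)\cdot\bb_i$ across an affine edge, which coordinatewise gives exactly the values you obtain from $b_i\cdot c$ and $h(b_i,b_i,c)$. The problem is that your write-up stops precisely where the lemma still has to be proved. In the affine case in which every $c\in\rel_{i+1}$ lies strictly semilattice-above $b_i$ you only announce a plan --- ``strengthen the induction hypothesis, carrying the claim that the last coordinate of a prefix lift may be taken to be any prescribed element \dots and verify that this strengthened form survives the single-edge extension'' --- without ever formulating that statement or checking that it is preserved across either kind of edge; and, as you yourself point out, the naive backward correction is blocked because semilattice reachability is not transitive. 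Since the entire content of the lemma is that the slice elements can be chosen coherently along the whole path, leaving this step as an intention is a genuine gap, not a routine verification.

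Two further remarks. First, a local error: in your exceptional case the pair $(\ba_i,b_i)(\ba_{i+1},c)$ does not ``become semilattice'' --- its coordinates in $I$ are affine (or equal) while the new coordinate is semilattice, and by the definition of $\cG(\rel)$ such a mixed pair is not an edge of any type; this is the real reason the case is troublesome, and your phrasing hides it (if it really were a semilattice edge there would be no obstacle at all, since a path may mix semilattice and affine edges). Second, for comparison: the paper's proof is purely forward and never revisits $\bb_1\zd\bb_i$; after replacing $\bc_{i+1}$ by $p(\bc_{i+1},\bb_i)$ it simply asserts that the resulting pair with $\bb_i$ is semilattice, respectively affine, and the only coordinates where that assertion needs care are exactly the ones you isolate (where $\bc_{i+1}$ sits strictly semilattice-above $\bb_i$). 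So you have correctly located the one delicate point, which the paper dispatches with ``as is easily seen,'' but locating it is not closing it. A concrete handle you did not exploit: over an affine base edge $\ba_i\cdot\ba_{i+1}=\ba_i$, so $\bb_i\cdot p(\bc_{i+1},\bb_i)$ lies again in the slice over $\ba_i$; hence the ``raise $b_i$'' move stays inside $\rel_i$, and what remains --- and is absent from your submission --- is an argument that such a raise can be made compatible with the already constructed prefix.
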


\begin{proof}
Observe that for any $\zA\in\gA$ and any $a,b\in\zA$, the edge $a\,p(b,a)$
is either semilattice or affine. Therefore, for any $\ba,\bb\in\rel$, 
the pair $\ba\bc$, where $\bc=\ba\cdot p(\bb,\ba)$, is semilattice, while
$\ba\bd$, where $\bd=p(\bb,\ba)\cdot\ba$, is affine.

Take any $\vc\bc k\in\pr_{[n]-I}\rel$ such that $(\ba_i,\bc_i)\in\rel$ and
define $\vc\bb k$ as follows: $\bb_1=\bc_1$, if $\ba_i\ba_{i+1}$ 
is semilattice then $\bb_{i+1}=\bb_i\cdot p(\bc_{i+1},\bb_i)$, and if
$\ba_i\ba_{i+1}$ is affine then $\bb_{i+1}=p(\bc_{i+1},\bb_i)\cdot\bb_i$.
As is easily seen, $\vc\bb k$ satisfy the conditions of the lemma.
\end{proof}

A set $S\sse\rel$ is said to be \emph{connected} if there is a path
from every element in $S$ to every other element in $S$.

\section{Properties of labeled graph of algebras}

\subsection{As-components, linked relations, and connectivity}

Let $\zA\in\gA$ be a conservative algebra. A set $B\sse A$ is called an
\emph{as-component} (for affine-semilattice) if for any $a\in A$ and $b\in A-B$ the 
edge $ba$ is either majority or semilattice directed from $b$ to $a$, see 
Fig.~\ref{fig:as-comp}. Since as-components are defined in terms of 
the graph $\cG(\zA)$, this definition can be naturally generalized to 
as-components of relations.

\begin{figure}[ht]
\centerline{\includegraphics[totalheight=5cm,keepaspectratio]{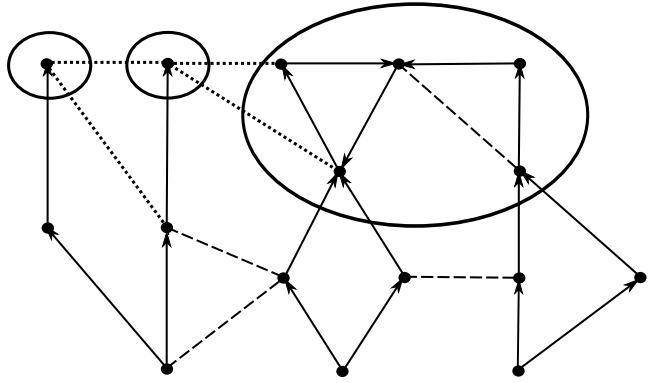}}
\caption{As-components. Solid lines represent semilattice edges, dashed lines 
represent affine edges, dotted lines represent majority edges; the edges that
are not shown are majority; as-components are encircled.}
\label{fig:as-comp}
\end{figure}

Let $\rel\le\zA\tm\zB$, where $\zA,\zB$ are subdirect products of conservative 
algebras. By $\tol_1(\rel)$ we denote the congruence of $\zA$ defined as  
the transitive close of the set $\{(a,b)\in\zA^2\mid \text{ there is } c\in\zB 
\text{ with } (a,c),(b,c)\in\rel\}$. Then $\tol_2(\rel)$ denotes the congruence on
$\zB$ defined in a similar way. Relation $\rel$ is said to be \emph{linked} if 
$\tol_1(\rel),\tol_2(\rel)$ are total relations.

\begin{lemma}\label{lem:as-subdirect}
Let $\rel\le\zA\tm\zB$, and $A',B'$ be as-components of 
$\zA,\zB$, respectively, such that $\rel'=\rel\cap(A'\tm B')\ne\eps$. Then 
$\rel'$ is a subdirect product of $A',B'$.
\end{lemma}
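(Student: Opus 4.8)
The plan is to prove $\pr_1\rel'=A'$; the equality $\pr_2\rel'=B'$ then follows by the same argument with the two factors interchanged. We may assume $\rel$ is a subdirect product of $\zA$ and $\zB$, since replacing $\zA,\zB$ by $\pr_1\rel,\pr_2\rel$ changes neither $\rel$ nor $\rel'$ and, the labelled graph $\cG$ being defined coordinate-wise, $A'\cap\pr_1\rel$ and $B'\cap\pr_2\rel$ are again as-components. From the definition of an as-component $B$ of an algebra $\zC$ one reads off: (i)~if $\bb\in B$ and $\bb\,\bc$ is a semilattice edge issuing from $\bb$, or an affine edge, then $\bc\in B$ (so $B$ is closed under paths); (ii)~$C\setminus B$ carries only majority edges, hence $\cdot$ restricted to it is the first projection; (iii)~$\bb\cdot\bc\in B$ for all $\bb\in B$, $\bc\in C$, because coordinate-wise $\cdot$ is the semilattice join on semilattice pairs and the first projection otherwise, so either $\bb\cdot\bc=\bb$ or $\bb\,(\bb\cdot\bc)$ is a semilattice edge issuing from $\bb$, and (i) applies. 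By (iii) in both coordinates, $\cdot(\rel'\times\rel)\sse\rel'$ (in particular $\rel'$ is closed under $\cdot$); hence $A^{*}:=\pr_1\rel'$ satisfies $A^{*}\sse A'$, is closed under $\cdot$, and $\cdot(A^{*}\times\zA)\sse A^{*}$ --- so $A^{*}$ is closed under semilattice edges issuing from its members.

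Assume for contradiction that $A^{*}\subsetneq A'$ and fix $\ba'\in A'\setminus A^{*}$. By subdirectness and idempotency of the operations, $R_{\ba'}:=\{\bd\in\zB\mid(\ba',\bd)\in\rel\}$ is a nonempty subalgebra of $\zB$; it is disjoint from $B'$ (a tuple $(\ba',\bb)\in\rel$ with $\bb\in B'$ would put $\ba'\in A^{*}$), so $R_{\ba'}\sse\zB\setminus B'$, the majority clique on which $\cdot$ is the first projection. The task is to produce, from some $(\ba',\bd)\in\rel$ with $\bd\in R_{\ba'}$ and some members of $\rel'$, one tuple $(\ba',\bb')\in\rel$ with $\bb'\in B'$; this forces $\ba'\in\pr_1\rel'=A^{*}$, a contradiction.

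Fix $(\ba_0,\bb_0)\in\rel'$, so $\ba_0\in A^{*}$, $\bb_0\in B'$. By the closure property of $A^{*}$ just noted, the $\cG(\zA)$-edge between $\ba'$ and $\ba_0$, if semilattice, issues from $\ba'$ (a semilattice edge directed from $\ba_0$ would put $\ba'\in A^{*}$). Suppose first that $\ba_0,\bd,\bb_0$ can be chosen so that $\ba'\,\ba_0$ is a majority edge and $\bd\,\bb_0$ is a semilattice edge directed into $B'$ (since $\bd\notin B'\ni\bb_0$, that edge is necessarily majority or such an edge). Then $g\bigl((\ba',\bd),(\ba',\bd),(\ba_0,\bb_0)\bigr)=(\ba',\bb_0)$: the first coordinate is $\ba'$ because, by Proposition~\ref{pro:uniformity}, $g$ is the majority operation on the majority pair $\{\ba',\ba_0\}$ and two of its arguments agree; the second is $\bb_0$ because $g$ is the join on the semilattice pair $\{\bd,\bb_0\}$ and $\bd$ lies coordinate-wise $\le$-below $\bb_0$. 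As $(\ba',\bb_0)\in\rel\cap(A'\times B')=\rel'$, this is the contradiction. If instead $\ba'\,\ba_0$ is an affine edge (and $\bd\,\bb_0$ is still semilattice into $B'$), the same conclusion comes from $h\bigl((\ba',\bd),(\ba_0,\bb_0),(\ba_0,\bb_0)\bigr)=(\ba',\bb_0)$, using $h(\ba',\ba_0,\ba_0)=\ba'$ and $h(\bd,\bb_0,\bb_0)=\bb_0$.

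It remains to treat the configurations where no such choice exists: either every $\ba_0\in A^{*}$ is joined to $\ba'$ only by a semilattice edge issuing from $\ba'$, or by no $\cG(\zA)$-edge at all --- so $\ba'$ sits, roughly, ``below'' $A^{*}$ --- or else, for every $\ba_0$ with $\ba'\,\ba_0$ majority or affine, all edges from $R_{\ba'}$ into $B'$ are majority. I expect this to be the main obstacle: there no combination of $(\ba',\bd)$ with members of $\rel'$ under $\cdot$, $g$, $h$ yields a tuple with first coordinate $\ba'$ and second coordinate in $B'$. The intended remedy is to apply the path-extension lemma (Lemma~\ref{lem:path-extension}) --- lifting a suitable path of $\cG(A')$ or $\cG(B')$ into $\rel$ --- together with the hypothesis $\rel'\ne\eps$, which is precisely what prevents $\rel'$ from collapsing to (a restriction of) the equality relation, the only shape compatible with this last configuration. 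With that excluded the analysis above produces the contradiction; the same argument with $\zA$ and $\zB$ exchanged gives $\pr_2\rel'=B'$, so $\rel'$ is a subdirect product of $A'$ and $B'$.
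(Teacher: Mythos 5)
Your argument is incomplete, and you say so yourself: the configuration you call the ``main obstacle'' (every edge from $\ba'$ into $A^{*}$ semilattice issuing from $\ba'$ or majority, or the $\zB$-side edges all majority) is precisely the case the proof must close, and the proposed remedy is not an argument --- Lemma~\ref{lem:path-extension} only lifts a path from a projection of $\rel$ into $\rel$ itself; it does not manufacture a semilattice or affine edge of the required orientation between $\pr_1\rel'$ and $A'-\pr_1\rel'$, and the remark that $\rel'\ne\eps$ forbids an ``equality-like shape'' is unsubstantiated. The idea you are missing is the one the paper's proof turns on: do not fix $\ba'\in A'\setminus A^{*}$ first and then inspect its edges. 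Instead use the connectivity of the as-component $A'$ itself --- no proper nonempty subset of $A'$ (in particular $A^{*}=\pr_1\rel'$) can be closed under outgoing semilattice and affine edges inside $A'$ --- to \emph{choose} $a\in A^{*}$ and $a'\in A'-A^{*}$ with $a\le a'$ or $aa'$ affine. With the edge oriented this way a single application of $p$ finishes: take $(a,b)\in\rel$ with $b\in B'$ and $(a',b')\in\rel$; if $b'\in B'$ we are done, and otherwise $p\left(\cl{a'}{b'},\cl{a}{b}\right)=\cl{a'}{b}\in\rel$, because $p(a',a)=a'$ (affine, or semilattice directed from $a$ to $a'$) and $p(b',b)=b$ (the edge $b'b$ is majority or semilattice into $B'$ since $B'$ is an as-component), contradicting $a'\notin A^{*}$. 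Your $g$- and $h$-cases are sound as far as they go, but they only treat favorable edge patterns and cannot be completed without invoking this property of $A'$; your preliminary observations (i)--(iii) capture only the closure half of the definition, which is exactly why you get stuck.

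Two further inaccuracies, minor by comparison: your (ii), that $\zB\setminus B'$ carries only majority edges and is a ``majority clique'' on which $\cdot$ is the first projection, does not follow from the definition of as-component and is false in general (the complement of $B'$ may contain other as-components, semilattice chains and affine edges); you use it only lightly, but it should be dropped. Likewise the claim in your normalization that $A'\cap\pr_1\rel$ and $B'\cap\pr_2\rel$ are again as-components of $\pr_1\rel,\pr_2\rel$ is asserted without proof; in the paper the lemma is applied with $\rel$ subdirect in $\zA\tm\zB$, which sidesteps the issue.
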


\begin{proof}
Let $A''=\pr_1\rel'\sse A'$. If $A''\ne A'$, there are $a\in A''$ and $a'\in A'-A''$
such that $a\le a'$ or $aa'$ is an affine edge. Take $(a,b),(a',b')\in\rel$ with
$b\in B'$. As is easily seen, $p\left(\cl{a'}{b'},\cl ab\right)=\cl{a'}b\in\rel$, since
$b'\not\in B'$, implying $a'\in A''$.
\end{proof} 

\begin{lemma}\label{lem:buket}
Let $\rel\le\zA\tm\zB$, and let $A',B'$ be as-components of 
$\zA,\zB$, respectively, such that there is $a\in A'$ with $\{a\}\tm B'\sse\rel$.
Then $A'\tm B'\sse\rel$.
\end{lemma}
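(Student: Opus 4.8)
The plan is to leverage Lemma~\ref{lem:as-subdirect}: by that lemma $\rel'=\rel\cap(A'\tm B')$ is a subdirect product of $A',B'$, so for every $a'\in A'$ there is some $b'\in B'$ with $(a',b')\in\rel'$, and symmetrically every element of $B'$ occurs as a second coordinate. We already have one ``full column'' $\{a\}\tm B'\sse\rel$, and the goal is to propagate this fullness to every $a'\in A'$. The natural tool is the idempotent binary operation $p$ from Proposition~\ref{pro:uniformity} together with the semilattice operation $\cdot$, applied coordinatewise to pairs of tuples in $\rel$, exactly as in the proof of Lemma~\ref{lem:as-subdirect}.

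First I would fix an arbitrary $a'\in A'$ and show $(a',b)\in\rel$ for every $b\in B'$. Since $A'$ is an as-component of $\zA$, for the element $a\in A'$ the edge between $a$ and $a'$ (inside $A'$) is majority, affine, or semilattice; the point of the as-component definition is really about edges leaving $A'$, so inside $A'$ all three types can occur. The key observation, mirroring the computation $p\!\left(\cl{a'}{b'},\cl ab\right)=\cl{a'}b$ in Lemma~\ref{lem:as-subdirect}, is this: pick $b'\in B'$ with $(a',b')\in\rel$ (exists by subdirectness of $\rel'$), and for an arbitrary target $b\in B'$ use the fact that $\{a\}\tm B'\sse\rel$ to get $(a,b)\in\rel$. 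Now apply $p$ coordinatewise to the pair $\left(\cl{a'}{b'},\cl ab\right)$. On the first coordinate, $a,a'\in A'\sse A$, so $p^\zA$ restricted to the two-element set $\{a,a'\}$ is one of $f\red{\{a,a'\}}$, the second projection, or the first projection, depending on the label of the edge $a a'$; in each of the majority and affine cases we can choose the order of arguments (or a short $p$-term together with $\cdot$) so that the first coordinate evaluates to $a'$. On the second coordinate, $b,b'\in B'\sse B$; here we do \emph{not} get to control the outcome, but whatever it is, it lies in $B'$ since $B'$ is a subalgebra of $\zB$ (conservativity), and the resulting pair is in $\rel$. So from $\{a\}\tm B'\sse\rel$ and one witness $(a',b')$ we obtain $\{a'\}\tm C\sse\rel$ for some subset $C\sse B'$ — and then one repeats the argument with roles suitably iterated to enlarge $C$ to all of $B'$.

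The cleanest way to close the gap is to handle the three edge-labels for $aa'$ uniformly by an induction along a \emph{path} in $\cG(A')$ (in the sense defined before Lemma~\ref{lem:path-extension}): if $aa'$ is semilattice or affine, a single application of $\cdot$ or of $p$-with-$\cdot$ to $\left(\cl a b,\cl{a'}{b'}\right)$ for each $b\in B'$ (with $b'$ a fixed witness for $a'$) produces $(a',b)\in\rel$, because on the second coordinate we are combining two elements of $B'$ and thus stay in $B'$, while on the first coordinate we land exactly on $a'$; iterating along a path from $a$ to $a'$ then yields $\{a'\}\tm B'\sse\rel$ for every $a'$ reachable from $a$. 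For majority edges $aa'$ one instead needs an element already known to be ``full'' on both sides of the edge, so I would first establish fullness across all semilattice/affine edges, observe that the set of ``full'' elements $\{a'\in A'\mid\{a'\}\tm B'\sse\rel\}$ is closed under the operations (hence a subalgebra) and is an up-set for $\le$ within $A'$, and then use that across any remaining majority edge $a'a''$ with $a'$ full we can pull $a''$ in via the majority term $g$ applied to a triple $\left(\cl{a'}{b},\cl{a''}{b'},\cl{a'}{b}\right)$ — again the second coordinate stays in $B'$ by conservativity.

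The main obstacle I anticipate is precisely the majority edges: unlike the semilattice and affine cases, a single binary operation applied to ``the full column at $a$'' does not obviously move mass onto a majority-neighbour $a'$, because the majority term needs \emph{two} inputs that already behave correctly. So the heart of the argument is to show that one can always route from $a$ to any $a'\in A'$ using semilattice/affine steps first (this is where I expect to invoke that $A'$ is an as-component, which forces a rich supply of such edges, possibly combined with Lemma~\ref{lem:path-extension} or a connectivity statement), and only then mop up majority edges using the majority polymorphism on a coordinate pair both of whose projections are already full. A secondary technical point to get right is that throughout, the second-coordinate computations must be checked to stay inside $B'$; this is immediate from conservativity since $B'$ is a subalgebra, so it is routine rather than hard.
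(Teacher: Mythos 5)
Your overall skeleton --- subdirectness of $\rel\cap(A'\tm B')$ from Lemma~\ref{lem:as-subdirect}, then propagation of the full column $\{a\}\tm B'$ through $A'$ along semilattice/affine edges, never crossing majority edges because an as-component is connected by such paths --- is the same shape as the paper's argument. The genuine gap is in the single-edge propagation step, which you declare routine: applying $p$ (or $\cdot$) coordinatewise to $(a,b)$ and $(a',b')$ lets you force the first coordinate to be $a'$, but the second coordinate is then dictated by the label of the edge $bb'$ inside $B'$, and ``it stays in $B'$'' is not what is needed --- it must hit the prescribed target $b$. Concretely, if $aa'$ is affine, the only order that fixes the first coordinate is $p((a',b'),(a,b))$ (and likewise for $\cdot$, or the $p$-then-$\cdot$ combination from Lemma~\ref{lem:path-extension}), and by Proposition~\ref{pro:uniformity} its second coordinate is $b'$ whenever $bb'$ is affine or a semilattice edge directed $b\le b'$; in the semilattice case $a\le a'$ the product $(a,b)\cdot(a',b')$ likewise returns $b'$ when $b\le b'$. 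In all these cases you merely recover the witness $(a',b')$ you started from. So one application yields only $\{a'\}\tm C\sse\rel$ for an uncontrolled $C\sse B'$, and ``repeat with roles suitably iterated to enlarge $C$'' is precisely the unproved heart of the lemma: an extremal choice is needed on the $B'$ side as well. The paper does exactly this: assuming a pair is missing, it produces $b,c\in A'$ and $d,e\in B'$ with $(b,d),(b,e),(c,e)\in\rel$, $(c,d)\notin\rel$, $b\le c$ or $bc$ affine, and $e\le d$ or $ed$ affine (found using the full column at $b$ and connectivity of both as-components), and then completes this rectangle.

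Two further concrete defects. First, in the doubly affine case ($bc$ and $ed$ both affine) the operations you allow yourself --- $\cdot$, $p$, $g$ --- all restrict to the first projection on affine two-element subsets, so applied to the available tuples they return their first argument and cannot create the missing pair; one must use the ternary term $h$, namely $h((b,d),(b,e),(c,e))=(c,d)$ computed coordinatewise, and this is where the hypothesis $\{a\}\tm B'\sse\rel$ (supplying both $(b,d)$ and $(b,e)$ at a single $b$) really earns its keep. Your proposal never invokes $h$, so this case is out of reach. Second, the majority ``mop-up'' is miscomputed: $g((a',b),(a'',b'),(a',b))$ has first coordinate $g(a',a'',a')=a'$, not $a''$, so it does not pull $a''$ in; fortunately that step is unnecessary, since any two elements of an as-component are joined by a semilattice/affine path, which is how the paper avoids majority edges altogether.
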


\begin{proof}
By Lemma~\ref{lem:as-subdirect} $\rel\cap(A'\tm B')$ is a subdirect product
of $A',B'$. Therefore, if $A'\tm B'\not\sse\rel$ there are $b,c\in A'$, $d,e\in B'$ such that
$(b,d),(b,e),\lb
(c,e)\in\rel$, $(c,d)\not\in\rel$, $b\le c$ or $bc$ is affine, and
$e\le d$ or $ed$ is affine. If at least one of these two edges is not affine, we
have $\cl cd\in\left\{p\left(\cl bd,\cl ce\right),\right.\lb\left. p\left(\cl ce,\cl bd\right)\right\}$.
If both edges are affine then $\cl cd=h\left(\cl bd,\cl ce,\cl cd\right)$, 
a contradiction.
\end{proof}

\begin{lemma}\label{lem:linked-rectangularity}
Let $\zA,\zB$ be subdirect products of conservative algebras and 
let $\rel\le\zA\tm\zB$ be subdirect and linked. Let also $A',B'$ be 
as-components of $\zA,\zB$, respectively, such that $\rel\cap(A'\tm B')
\ne\eps$. Then $A'\tm B'\sse\rel$.
\end{lemma}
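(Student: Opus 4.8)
The plan is to prove the statement by combining the two preceding lemmas with the linkedness hypothesis. Set $\rel'=\rel\cap(A'\tm B')$; by Lemma~\ref{lem:as-subdirect} it is a subdirect product of $A'$ and $B'$, so it is nonempty in every row and column. By Lemma~\ref{lem:buket}, it suffices to produce a single $a\in A'$ with $\{a\}\tm B'\sse\rel$, equivalently $\{a\}\tm B'\sse\rel'$. So the whole argument reduces to finding one ``full'' element on the $A'$-side.

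To find such an element I would exploit linkedness of $\rel$ on the $\zA$-side. The idea: starting from any $(a_0,b_0)\in\rel'$, use the definition of $\tol_1(\rel)$ — which is total — to walk through $\zA$ via fork steps ``$a,a''$ with a common second coordinate $c$'', and show that along such a walk the set $B_a=\{b\in B'\mid (a,b)\in\rel\}$ can only grow (or at least that one can reach an element whose $B_a$ is all of $B'$). The key local computation is the usual ``rectangularity via the basic operations'' trick already used in Lemmas~\ref{lem:as-subdirect} and~\ref{lem:buket}: given $(a,b),(a,b'),(a'',c)\in\rel'$ and $(a,c),(a'',c)\in\rel$ with $c\in B'$ (obtained from subdirectness of $\rel'$ and of $\rel$), one applies $p$ and $h$ to the appropriate pairs, using that the coordinatewise edges on $A'\tm B'$ are semilattice or affine since $A',B'$ are as-components, to conclude $(a'',b),(a'',b')\in\rel$, i.e.\ $B_a\sse B_{a''}$. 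Iterating along a $\tol_1$-path connecting all of $\zA$ forces every $B_a$, $a\in A'$, to coincide; call this common set $B^*\sse B'$.

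It then remains to see $B^*=B'$, and here I would run the symmetric argument on the $\zB$-side using linkedness of $\tol_2(\rel)$: the set $\{b\in B'\mid \exists a\in A',\ (a,b)\in\rel\}=B^*$ must likewise be closed under $\tol_2$-fork steps within $B'$, but since $\rel'$ is subdirect this set is nonempty, and totality of $\tol_2$ forces it to be all of $B'$. Hence $B^*=B'$, so for any $a\in A'$ we get $\{a\}\tm B'\sse\rel$, and Lemma~\ref{lem:buket} yields $A'\tm B'\sse\rel$.

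The main obstacle I anticipate is the bookkeeping in the fork-step induction: one must be careful that the intermediate elements and common coordinates used in each step actually lie inside the as-components $A'$ and $B'$ (the $\tol_1$-path runs through all of $\zA$, not just $A'$), and that the relevant coordinatewise edges are of the right type (semilattice/affine) so that $p$ and $h$ behave as on a two-element as-component. The cleanest way around this is probably to phrase it not as ``grow $B_a$ along a path'' but as ``$\{a\in A'\mid B_a=B^*\}$ is a $\tol_1$-closed, hence total, subset of $\zA$ once we know it is nonempty and the growth inclusions hold'', thereby isolating the algebra-manipulation lemma (the $p,h$ computation) as the only nontrivial ingredient and letting totality of $\tol_1,\tol_2$ do the rest.
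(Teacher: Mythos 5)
Your overall reduction is the right one and matches the paper's endgame: produce a single $\ba\in A'$ with $\{\ba\}\tm B'\sse\rel$ and finish with Lemma~\ref{lem:buket}. The gap is in how you propose to produce it. Your ``key local computation'' --- that from a $\tol_1$-fork $(a,c),(a'',c)\in\rel$ together with $(a,b)\in\rel$ one can derive $(a'',b)\in\rel$ by applying $p$ and $h$ --- is not available in general, because it silently assumes the relevant edges are semilattice or affine. The premise you invoke, ``the coordinatewise edges on $A'\tm B'$ are semilattice or affine since $A',B'$ are as-components,'' is false: being an as-component only forbids affine edges across the boundary and semilattice edges directed out of the component; majority edges may occur freely \emph{inside} $A'$ and $B'$, and of course along the $\tol_1$-path, which runs through all of $\zA$ with witnesses anywhere in $\zB$, not inside $A'\tm B'$. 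When $aa''$ and $bc$ are majority edges, none of $p,g,h,\cdot$ yields $(a'',b)$ (e.g.\ $p$ just returns one of the known tuples), so the asserted monotonicity $B_a\sse B_{a''}$ along fork steps fails, and with it the claim that $\{a\in A'\mid B_a=B^*\}$ is $\tol_1$-closed. This is not a bookkeeping issue that a reformulation fixes; rectangularity across majority edges is exactly the obstruction the lemma has to overcome, and it is why Lemma~\ref{lem:buket} needs the extra hypothesis of an already-full element before any fork-chasing works.

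The paper circumvents this by a different mechanism: induction on $|\zA|,|\zB|$. Using linkedness it builds the sets $B_1=\{\bb\},B_2=\rel[B_1],\ldots$ and extracts a proper subalgebra $\zA''\subset\zA$ (chosen minimal) such that $\rel\cap(\zA''\tm\zB)$ is still subdirect and linked; the induction hypothesis applied there, together with a short computation with $p$, $\cdot$ and $h$ in the case analysis on $D=\rel^{-1}[B']$, yields some $\ba\in\zA''$ with $\{\ba\}\tm B'\sse\rel$; a further application of $p$ moves this full element into $A'$, and only then is Lemma~\ref{lem:buket} invoked. If you want to salvage your approach you would need a substitute for the induction, i.e.\ some argument that lets the fork steps avoid majority edges (for instance via the connectivity of as-components by semilattice/affine paths from Lemma~\ref{lem:connectivity}), but as written the central step of your plan does not go through.
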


\begin{proof} 
We prove by induction on the size of $\zA,\zB$.
The base case of induction, when $|\zA|=1$ or $|\zB|=1$, is obvious.

Take any $\bb\in \zA$ and construct a sequence of subalgebras
$\vc Bk$ such that $B_i\sse \zA$ if $i$ is odd and $B_i\sse \zB$
if $i$ is even, as follows: $B_1=\{\bb\}$, $B_i=\rel[B_{i-1}]=
\{\bd\mid(\bc,\bd)\in\rel\text{ for some }\bc\in B_{i-1}\}$ if $i$ 
is odd, and $B_i=\rel^{-1}[B_{i-1}]=\{\bc\mid(\bc,\bd)\in\rel
\text{ for some }\bd\in B_{i-1}\}$ otherwise. 
By construction
for each $i\le k$ the relation $\rel_i=\rel'\cap(B_i\tm B_{i+1})$
(or $\rel_i=\rel'\cap(B_{i+1}\tm B_i)$) is linked.
Let $k$ be the maximal with $B_k\subset \zA$ or $B_k\subset \zB$. 
Without loss of generality we assume $B_k\subset \zA$. Set $\zA''= B_k$.
Thus there exists $\zA''\subset \zA$ such that $\rel'=\rel\cap(\zA''\tm \zB)
\sse\zA''\tm\zB$ is linked and subdirect. Choose a
minimal subalgebra $\zA''$ with this property. We show that there is 
$\ba\in \zA''$ such that $\{\ba\}\tm B'\sse\rel$.

If there is an as-component $C$ of $\zA''$ such that $\rel\cap(C\tm B')\ne\eps$
then $C\tm B'\sse\rel$ by induction hypothesis, and the claim follows. 
Let $D=\rel^{-1}[B']$. If $D$ contains no elements from an 
as-component, there are $\bb\in D$ and $\bc\in\zA-D$ such that $\bb\bc$ 
is a semilattice or affine edge. Take $\bb'\in B'$ and $\bc'\in\zB$
such that $(\bb,\bb'),(\bc,\bc')\in\rel$. Let
$$
\cl{\bc}{\bd}=p\left(\cl{\bc}{\bc'},\cl{\bb}{\bb'}\right), 
\quad\text{and}\quad
\cl{\bc''}{\bd'}=\cl{\bb}{\bb'}\cdot\cl{\bc}{\bd},\quad \bc''\in\{\bb,\bc\}. 
$$
Suppose $\zB\le\zA_1\tm\ldots\tm\zA_k$. Then for any $i\in[k]$ the pair
$\bb'[i]\bd[i]$ is a semilattice or affine edge. If there is no semilattice 
edge of this form then $\bb'\bd$ is an affine edge, implying $\bd\in B'$,
and $\bc\in D$, a contradiction. Otherwise $\bb'\bd'$ is a semilattice 
edge and $\bd'\bd$ is an affine one, hence $\bd\in B'$, a contradiction 
again. 

Let now $\ba\in\zA''$ be such that $\{\ba\}\tm B'\sse\rel$. If $\ba'\in A'$,
we are done. Otherwise take any $\bb'\in A'$ with $\rel[\bb]\cap B'\ne\eps$, 
and set $\bb=p(\ba,\bb')$. As before, it is easy to see that $\bb\in A'$.
Moreover, $p(\ba,\bb)=\bb$. Let also $B''=B'\cap\rel[\bb]$. 
If $B''\ne B'$, there is $\bc\in B'- B''$
and $\bd\in B''$ such that $\bd\bc$ is a semilattice or affine edge. Then 
$$ 
p\left(\cl{\ba}{\bc},\cl{\bb}{\bd}\right)=\cl{\bb}{\bc}, 
$$ 
a contradiction. Thus $\ba$ can be chosen from
$A'$. The proof is now completed by Lemma~\ref{lem:buket}.
\end{proof}

\begin{lemma}\label{lem:connectivity}
Let $\rel\le\zA_1\tm\ldots\tm\zA_n$ for conservative algebras $\vc\zA n$, 
and let $A'_i$ be an as-component
of $\zA_i$ for $i\in[n]$, such that $(\vc an)\in\rel$ for some $a_i\in A'_i$, $i\in[n]$.
Then $\rel'=\rel\cap(A'_1\tm\ldots\tm A'_n)$ is a subdirect product of the 
$A'_i$ and $\rel'$ is an as-component of $\rel$.
\end{lemma}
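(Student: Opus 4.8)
The plan is to prove the two assertions (subdirectness of $\rel'$ and the fact that $\rel'$ is an as-component of $\rel$) essentially in parallel, reducing both to the binary case treated in Lemmas~\ref{lem:as-subdirect} and~\ref{lem:linked-rectangularity}. For subdirectness, I would proceed by induction on $n$. The base case $n=1$ is trivial, and $n=2$ is exactly Lemma~\ref{lem:as-subdirect}. For the inductive step, fix $i\in[n]$ and an element $c\in A'_i$; I want to produce a tuple in $\rel'$ whose $i$th coordinate is $c$. Consider the projection $\relo=\pr_{[n]-\{i\}}\rel$, and the auxiliary binary relation $T\le\zA_i\tm\zB$, where $\zB=\pr_{[n]-\{i\}}\rel$ viewed as a subdirect product of the remaining algebras. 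The coordinatewise product of the $A'_j$ with $j\ne i$ is an as-component of $\zB$ (this is where I would invoke the "coordinatewise" part of the statement, perhaps proved first as the $n-1$ case of the lemma itself, so the induction must be set up to carry both halves simultaneously). Then $T\cap(A'_i\tm\prod_{j\ne i}A'_j)\ne\eps$, so Lemma~\ref{lem:as-subdirect} applied to $T$ gives that $c$ occurs as a first coordinate, and the witnessing tuple on the other side can be taken inside $\prod_{j\ne i}A'_j$ by the inductive subdirectness; splicing these gives the desired tuple in $\rel'$.

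For the second assertion — that $\rel'$ is an as-component of $\rel$ — I must show that for every $\bc\in\rel$ with $\bc\notin\rel'$ and every $\ba\in\rel$, the edge $\bc\ba$ of $\cG(\rel)$ is either majority or semilattice directed from $\bc$ to $\ba$. Since $\bc\notin\rel'$, there is a coordinate $i$ with $\bc[i]\notin A'_i$. Now consider the edge $\bc\ba$ coordinatewise. I want to argue that on coordinate $i$ the edge $\bc[i]\ba[i]$ is majority or semilattice from $\bc[i]$ to $\ba[i]$ because $A'_i$ is an as-component of $\zA_i$ and $\ba[i]\in A'_i$, $\bc[i]\notin A'_i$ — provided $\bc[i]\ne\ba[i]$. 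The subtlety is that the edge type of $\bc\ba$ in $\cG(\rel)$ is determined by the edge types on \emph{all} coordinates simultaneously (it is a majority edge iff every coordinate is "$=$ or majority", etc.), so I need to rule out that $\bc\ba$ is an affine edge of $\cG(\rel)$ and, if it is a semilattice edge, that it is directed the right way. An affine edge would force $\bc[i]\ba[i]$ affine or $\bc[i]=\ba[i]$; the latter is excluded since $\ba[i]\in A'_i\not\ni\bc[i]$, and the former contradicts the as-component property at coordinate $i$. Similarly a semilattice edge directed from $\ba$ to $\bc$ would, on coordinate $i$, be directed from $\ba[i]$ to $\bc[i]$ (or equal), again contradicting that $A'_i$ is an as-component and $\bc[i]\notin A'_i$. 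Hence the only remaining possibilities are majority or semilattice from $\bc$ to $\ba$, as required.

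The main obstacle I anticipate is the bookkeeping in the induction: I need the statement "$\prod_j A'_j$ is an as-component of the projection" available for $n-1$ coordinates before I can run the binary argument for $n$, so the cleanest route is to prove, as a preliminary lemma or as the first half of the induction, that a product of as-components of the factor algebras is an as-component of their direct product (which follows directly from the coordinatewise definition of $\cG$ and the definition of as-component), and then feed that into the argument above. A secondary point to handle carefully is that $A'_i\notni\bc[i]$ only tells us about one coordinate, so in the as-component verification I must be scrupulous that a single "bad" coordinate is enough to pin down the global edge type — but this is exactly how the coordinatewise edge definitions interact with the as-component condition, and no linkedness hypothesis is needed here (unlike in Lemma~\ref{lem:linked-rectangularity}), which keeps the argument elementary.
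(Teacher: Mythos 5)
Your verification of the boundary condition --- that an edge of $\cG(\rel)$ between a tuple outside $\rel'$ and a tuple of $\rel'$ can only be majority or a semilattice edge oriented into $\rel'$ --- is correct, and it is indeed the easy, purely coordinatewise half of the statement (note only that the quantification should be over $\ba\in\rel'$, since for $\ba,\bc$ both outside $\rel'$ an affine edge at the bad coordinate is not excluded). The genuine gap is that you never prove that $\rel'$ is \emph{connected}: that any two tuples $\ba,\bb\in\rel'$ are joined by a path of semilattice/affine edges inside $\rel$. This is what the word ``component'' carries, it is the entire content of the paper's proof (the lemma is even labelled ``connectivity''), and it is exactly what the later applications consume: Lemma~\ref{lem:rectangularity} invokes this lemma precisely to produce a path $\bc=\bc_1\zd\bc_k=\bc'$ between two tuples of the as-component, and Lemma~\ref{lem:max-extension} uses it in the same way. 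Connectivity cannot be obtained by local, coordinatewise reasoning, because $\cG(\rel)$ is not complete: two tuples of $\rel'$ may have no edge between them at all, and intermediate tuples of $\rel$ must be manufactured. That is why the paper views $\rel$ as a binary relation over $\pr_{[n-1]}\rel$ and $\zA_n$, splits into the linked case (where Lemma~\ref{lem:linked-rectangularity} is applied) and the non-linked case over simple factors (where uniqueness of extensions lets a path in the projection lift), and then handles non-simple factors by passing to a maximal congruence and inducting on the number and size of non-simple factors. Your closing remark that ``no linkedness hypothesis is needed here, which keeps the argument elementary'' is precisely where the proposal parts company with what has to be proved.

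A secondary consequence: your subdirectness induction feeds the statement ``$\pr_{[n]-\{i\}}\rel\cap\prod_{j\ne i}A'_j$ is an as-component of $\pr_{[n]-\{i\}}\rel$'' into Lemma~\ref{lem:as-subdirect}, but the proof of that lemma itself uses as-connectivity of the component (it picks $a\in A''$ and $a'\in A'-A''$ with $a\le a'$ or $aa'$ affine, which requires an as-path of $A'$ crossing from $A''$ to its complement). Since your induction only establishes the boundary property, it does not supply the hypothesis in the strength in which Lemma~\ref{lem:as-subdirect} (and the rest of the paper) uses it. So the subdirectness half, while structured sensibly, inherits the same missing ingredient: the connectivity argument must be carried out first, or simultaneously, essentially along the lines of the paper's linked/non-linked and simple/non-simple case analysis.
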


\begin{proof}
Let us first suppose that $\vc\zA n$ are simple. We prove the result by
induction on $n$. The trivial case $n=1$ gives the base case of induction.
Otherwise, we consider $\rel$ as a binary relation, a subdirect product of
$\zA=\pr_{[n-1]}\rel$ and $\zA_n$. 

Let $\ba,\bb\in\rel'$, $\ba'=\pr_{[n-1]}\ba$, $\bb'=\pr_{[n-1]}\bb$, 
and $a=\ba[n]$, $b=\bb[n]$. By the induction hypothesis there is a path
$\ba'=\ba'_1,\ba'_2\zd\ba'_k=\bb'$ in $\pr_{[n-1]}\rel'$. There are two 
cases. If $\rel$ is linked (as a subdirect product of $\zA\tm\zA_n$, then
$\pr_{[n-1]}\rel'\tm A'_n\sse\rel'$. Otherwise, 
as $\rel$ is not linked and $\zA_n$ is simple, for every $\bc'\in\zA$ there
is a unique $c\in\zA_n$ such that $(\bc',c)\in\rel$. In particular, 
there are unique $\vc ak$ such that $(\ba_i,a_i)\in\rel$. It is not
hard to see that if $\ba_i\ba_{i+1}$ is a semilattice (affine) edge, so is $a_ia_{i+1}$,
because otherwise $\ba_i$ or $\ba_{i+1}$ has more than one extension. 
Thus $(\ba_1,a_1)\zd(\ba_k,a_k)$ is a path from $\ba$ to $\bb$.

Suppose that not all of the algebras $\vc\zA n$ are simple.
We prove the lemma by induction on the number of non-simple
factors and their size.

We start with a couple of simple observations. If $\zA$ is a conservative 
algebra and $\al$ is its congruence, then $\zA\fac\al$ is also 
a conservative algebra. Moreover, if $\oa\ob$, $\oa,\ob\in\zA\fac\al$ 
is a semilattice (majority, affine)
edge of $\zA\fac\al$ then for any $a\in\oa, b\in\ob$ the edge $ab$
is also semilattice (respectively, majority, affine). It follows immediately from the 
observation that if $m\in\{f,g,h\}$ then $\{a,b\}$ is closed under
$m$, and $m(x,y,z)=a$ for $x,y,z\in\{a,b\}$ if and only if 
$m(x^\al,y^\al,z^\al)=\oa$. 

Suppose that $\zA_n$ is not simple and $\al$ is its maximal congruence. 
From the observation above it follows that $A''_n=\{a^\al\mid a\in A'_n\}$
is an as-component of $\zA_n\fac\al$.
Consider the relation 
$\relo=\{(\vc a{n-1},a_n^\al)\mid (\vc an)\in\rel\}$. By the 
induction hypothesis $\relo'=\{(\vc a{n-1},a_n^\al)\mid 
(\vc an)\in\rel'\}$ is connected and is an as-component of $\relo$.
Take $\ba,\bb\in\rel'$ and let $\ba',\bb'$ be the corresponding 
tuples from $\relo'$. Then there is a path $\ba'=\ba'_1,\ba'_2
\zd \ba'_k=\bb'$. For each $i\in[k]$ pick a tuple $\ba_i\in\rel'$
such that $\ba_i[n]\in\ba'_i[n]$. By the observation above, if
$\ba'_i\ba'_{i+1}$ is a semilattice (affine) edge, so is $\ba_i[n]\ba_{i+1}[n]$, 
and $\ba_i\ba_{i+1}$, as well. The sequence $\vc\ba k$ is a path from 
$\ba$ to $\bb$.
\end{proof}

\subsection{Rectangularity}\label{sec:rectangularity}

Let $\rel\le\zA_1\tm\ldots\tm\zA_n$ and let $A'_i\sse\zA_i$, $A'_j\sse\zA_j$
be as-components of $\zA_i$, $\zA_j$, respectively. Positions $i$ and $j$
are said to be \emph{$A'_i,A'_j$-related} if $\ba[i]\in A'_i$
if and only if $\ba[j]\in A'_j$, for any $\ba\in\rel$. A set $I\sse[n]$ is called 
a \emph{strand} with respect to as-component $\vc {A'}n$ of $\vc\zA n$, 
respectively, if it is maximal such that any $i,j\in I$ are $A'_i,A'_j$-related. 
As is easily seen, the strands with respect $\vc {A'}n$ form a partition of $[n]$.

\begin{lemma}\label{lem:rectangularity}
Let $\rel\le\zA_1\tm\ldots\tm\zA_n$ and let $\vc {A'}n$ be as-components 
of $\vc\zA n$, respectively, such that $\rel\cap(A'_1\tm\ldots\tm A'_n)\ne\eps$.
Let also $\vc Ik$ be the partition of $[n]$ into strands with respect to
$\vc {A'}n$ and $\rel_i=\pr_{I_j}\rel\cap\prod_{\ell\in I_j}A'_\ell$.
Then $\rel_1\tm\ldots\tm\rel_k\sse\rel$.
\end{lemma}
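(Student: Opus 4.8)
The plan is to reduce the general statement to the binary case (Lemma~\ref{lem:linked-rectangularity}) by iterating over pairs of strands. First I would fix the tuple $\ba^0\in\rel\cap(A'_1\tm\ldots\tm A'_n)$ guaranteed by hypothesis, so that each $\rel_j$ is nonempty. The key observation is that for a single strand $I_j$, the projection $\rel_j=\pr_{I_j}\rel\cap\prod_{\ell\in I_j}A'_\ell$ is a subdirect product of the $A'_\ell$, $\ell\in I_j$: this follows from Lemma~\ref{lem:connectivity} applied to $\pr_{I_j}\rel$, since $(\prod_{\ell\in I_j}A'_\ell)$ meets $\pr_{I_j}\rel$ and the as-component structure is inherited by projections. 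So each $\rel_j$ is an honest subdirect product living on a product of as-components.

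Next I would glue the strands two at a time. Consider $\relo=\pr_{I_1\cup I_2}\rel$ and the as-components $\prod_{\ell\in I_1}A'_\ell$ of $\pr_{I_1}\rel$ and $\prod_{\ell\in I_2}A'_\ell$ of $\pr_{I_2}\rel$ (viewing $\pr_{I_1}\rel$, $\pr_{I_2}\rel$ themselves as the ``conservative algebras'' in the binary picture — their as-components being exactly the products of the component as-components, again by Lemma~\ref{lem:connectivity}). The crucial point is that, because $I_1$ and $I_2$ are \emph{distinct} strands, no position of $I_1$ is $A'$-related to any position of $I_2$; that means the binary relation $\relo\cap((\prod_{\ell\in I_1}A'_\ell)\tm(\prod_{\ell\in I_2}A'_\ell))$, restricted appropriately, cannot have its projections trapped inside sub-as-components in a correlated way, so the relevant binary relation is \emph{linked} in the sense of Lemma~\ref{lem:linked-rectangularity}. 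Once linkedness is established, Lemma~\ref{lem:linked-rectangularity} gives $\rel_1\tm\rel_2\sse\relo$. Iterating — treating $\rel_1\tm\rel_2$ as a new ``block'' and gluing in $\rel_3$, etc. — yields $\rel_1\tm\ldots\tm\rel_k\sse\pr_{I_1\cup\ldots\cup I_k}\rel=\rel$.

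The main obstacle I anticipate is verifying the linkedness claim precisely: one must show that the tolerance $\tol_1$ of the binary relation between strand-blocks is the total relation, and this is exactly where the maximality in the definition of strand is used. If $\tol_1$ were not total, its blocks would refine the universe $\prod_{\ell\in I_1}A'_\ell$ into pieces each correlated with a piece of $\prod_{\ell\in I_2}A'_\ell$, and one would have to argue this correlation propagates down to some individual coordinate $i\in I_1$ and $j\in I_2$, contradicting that $i$ and $j$ lie in different strands. Making ``propagates down to a coordinate'' rigorous — i.e.\ that a nontrivial congruence of the product forcing a coordinate of $I_2$ to be determined by the $I_1$-part must do so through some single coordinate — is the delicate step; I would handle it by induction on $|I_1|+|I_2|$, peeling off one coordinate at a time and using that as-components behave well under the projection and under passing to factors (the observations recorded in the proof of Lemma~\ref{lem:connectivity}). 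The remaining steps (subdirectness of each $\rel_j$, the iteration) are routine once the binary linked case is in hand.
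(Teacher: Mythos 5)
Your outline has the right general shape (subdirectness of each $\rel_j$ via Lemma~\ref{lem:connectivity}, reduction to a binary situation, rectangularity of a linked binary relation, iteration), but the step on which everything rests --- that the binary relation between two strand-blocks is \emph{linked} --- is neither proved nor true as stated, so there is a genuine gap. The strand condition only says that for $i\in I_1$, $j\in I_2$ there exists \emph{some} tuple of $\rel$ violating the biconditional ``$\ba[i]\in A'_i$ iff $\ba[j]\in A'_j$''; linkedness of $\pr_{I_1\cup I_2}\rel$, viewed as a subdirect product of $\pr_{I_1}\rel\tm\pr_{I_2}\rel$, is a global connectivity requirement ($\tol_1,\tol_2$ total) and does not follow from this. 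Nothing in the hypotheses prevents $\pr_{I_1\cup I_2}\rel$ from splitting into one $\tol_1$-class containing the as-components (where the unrelated witness tuple lives) and a second, completely separate class consisting of tuples that avoid the as-components on both strands, with all relevant edges to it being majority edges --- the path/absorption arguments (Lemmas~\ref{lem:path-extension}, \ref{lem:buket}) give you no way to merge such a class with the rest, so $\tol_1$ need not be total and Lemma~\ref{lem:linked-rectangularity} cannot be applied to the whole relation. You flag this yourself as ``the delicate step,'' but the induction you sketch (``the correlation must propagate down to a single coordinate'') is essentially the statement being proved and is not carried out; as written, the proposal asserts the crux rather than proving it.

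For comparison, the paper never claims global linkedness. It fixes a tuple $\ba'$ that leaves the as-components on a \emph{minimal} set $J$ of coordinates (rather than pairing two strands), regards $\rel$ as a subdirect product of $\pr_J\rel$ and $\pr_{[n]-J}\rel$, and works with the generally proper congruence $\tol_1(\rel)$: first, a path argument using $\cdot$, $h$ and Lemma~\ref{lem:connectivity} produces $\bb\in B'$ and $\bb'\notin B'$ with a common extension, so the $\tol_1$-class of $B'$ meets the complement of $B'$; second, an argument with the operation $p$ (relying on the minimality of $J$ to control the edge types $\bb'[i]\bd[i]$) shows $B'$ lies inside a \emph{single} $\tol_1$-class; only within that class is the relation linked, and there Lemma~\ref{lem:linked-rectangularity} and the induction finish the proof. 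So the correct substitute for your linkedness claim is this local statement about one $\tol_1$-class, and establishing it needs the specific edge-type analysis with $p$ and $h$, not just the definition of strands.
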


\begin{proof}
We proceed by induction on $n$.
If there is only one strand with respect to $\vc {A'}n$, say, if $n=1$, there is 
nothing to prove. So, suppose that there are at least two strands.
There are $i,j\in[n]$ and $\ba,\ba'\in\rel$ such that 
$\ba\in A'_1\tm\ldots\tm A'_n$, $\ba'[i]\in A'_i$ and $\ba'[j]\in A_j-A'_j$. 
Let $J\sse[n]$ be the set of all $\ell\in[n]$ with $\ba'[\ell]\in A_\ell-A'_\ell$.
Choose $\ba'$ such that $J$ is minimal. Without loss of generality, 
$J=[s]$ for $s<n$. Set $\bc=\pr_{[n]-J}\ba,\bc'=\pr_{[n]-J}\ba'$ and 
$\bb=\pr_J\ba,\bb'=\pr_J\ba'$.

We show first that these tuples can be chosen such that $\bc=\bc'$.
Let $A'=\pr_{[n]-J}\rel\cap(A'_{s+1}\tm\ldots\tm A'_n)$ and 
$B'=\pr_J\rel\cap(A'_1\tm\ldots\tm A'_s)$.
By Lemma~\ref{lem:connectivity} $A'$ is an as-component of 
$\pr_{[n]-J}\rel$ and $B'$ is an as-component of $\pr_J\rel$. 
Since $\bc,\bc'\in A'$, there is a path 
$\bc=\bc_1,\bc_2\zd \bc_k=\bc'$. Choose some 
$\vc\bb k\in\pr_J\rel$ such that $\bb_1=\bb$, $\bb_k=\bb'$ and $(\bb_i,\bc_i)\in\rel$ for $i\in[t]$.
There is $i$ such that $\bb_i\in B'$, but $\bb_{i+1}\in \pr_J\rel-B'$. Observe that 
$\bb_i\le\bd=\bb_i\bb_{i+1}$, and $\bd\bd'$, $\bd'=h(\bd,\bd,\bb_{i+1})$, is an
affine edge. Therefore $\bd,\bd'\in B'$. Then\\
if $\bc_i\bc_{i+1}$ is semilattice then 
$\cl\bd{\bc_{i+1}}=\cl{\bb_i}{\bc_i}\cdot\cl{\bb_{i+1}}{\bc_{i+1}}$ belongs to $\rel$, or\\
if $\bc_i\bc_{i+1}$ is affine then 
$\cl{\bd'}{\bc_{i+1}}=h\left(\cl{\bb_i}{\bc_i}\cdot\cl{\bb_{i+1}}{\bc_{i+1}},
\cl{\bb_i}{\bc_i}\cdot\cl{\bb_{i+1}}{\bc_{i+1}},\cl{\bb_{i+1}}{\bc_{i+1}}\right)$
 belongs to $\rel$.\\
Either way, $\bc=\bc'$ can be chosen to be $\bc_{i+1}$, and $\bb=\bb_i$ and 
$\bb'$ to be $\bd$ or $\bd'$.

We consider $\rel$ as a subdirect product of $\pr_J$ and $\pr_{[n]-J}\rel$. 
Recall that $\tol_1(\rel)$ denotes the congruence generated by all pairs 
$(\bd,\bd')\in(\pr_J\rel)^2$ that have a common extension $\be\in\pr_{[n]-J}\rel$ 
with $(\bd,\be),(\bd',\be)\in\rel$. By what is already proved $\tol_1(\rel)$ is nontrivial, 
and there are $\bb\in B'$ and $\bb'\not\in B'$ with $(\bb,\bb')\in\tol_1(\rel)$. We
prove that $B'$ is in a $\tol_1(\rel)$-block. For elements $\bb,\bb'$ we take the
ones found in the previous paragraph; there is also $\bc\in A'$ such that
$(\bb,\bc),(\bb',\bc)\in\rel$. Now, if $\al=\tol_1(\rel)$ is 
nontrivial on $B'$, choose $\bd\in B'$ from a different $\al$-block
than $\bb$, and such that $\bd^\al\bb^\al$ is either semilattice or affine. 

First, note that for any $i\in J$ the edge $\bb'[i]\bd[i]$ is either semilattice or 
majority. Indeed, suppose this is not the case. If $\bd[i]\bb'[i]$ is semilattice or 
affine then $\bb'[i]\in A'_i$, a contradiction with the construction. Therefore,
$p(\bb',\bd)=\bd$, while $p(\bb'^\al,\bd^\al)=\bb^\al$, a contradiction again.

To complete the proof it remains to apply the lemma to $\pr_J\rel$ and $\pr_{[n]-J}\rel$.
\end{proof}

\section{Solving conservative CSPs}

Let $\gA$ be a finite class of conservative algebras closed under subalgebras
and retracts. For example, as we noted $\gA$ can be the set of all 
subalgebras of a finite conservative algebra.
In this section we present an algorithm solving $\CSP(\gA)$.
We start with two reductions of the problem.


\subsection{The as-component exclusion reduction}\label{sec:as-exclusion}

The first reduction converts the problem to a number of CSP instances in 
which every domain is an as-component, and then either provides a solution,
or allows to eliminate some elements from some of the original domains.

Let $\cP=(V,\dl,\cC)$ be a $\CSP(\gA)$ instance. 
Choose as-com\-po\-nents $A'_v\sse\dl(v)$ for each $v\in V$
such that for any constraint $\ang{(\vc vn),\rel}$ the set 
$\rel\cap(A'_{v_1}\tm\ldots\tm A'_{v_n})$ is nonempty. We call such a 
collection of as-components a \emph{consistent collection}. A strand of
$\cP$ with respect to $A'_v$, $v\in V$, is a maximal set $W\sse V$
such that for any partition $W_1,W_2$ of $W$ some $w_1\in W_1$,
$w_2\in W_2$ are in the same strand with respect to $A'_{v_1}\zd A'_{v_n}$
of a constraint $\ang{(\vc vn),\rel}\in\cC$. Let $\vc Wk$ be the partion of
$V$ into strands with respect to $A'_v$, $v\in V$. For $i\in[k]$ denote by
$\cP_i$ the problem instance $(W_i,\dl'_i,\cC_i)$, where $\dl'_i:W_i\to\gA$
with $\dl'_i(v)=A'_v$, and 
for each $\ang{(\vc vn),\rel}\in\cC$ we include into $\cC_i$ the constraint
$\ang{(v_{i_1}\zd v_{i_\ell}),\pr_{\{\vc i\ell\}}\rel}$ and $i_j$ is
the positions of $v_j\in W_i$. 

\begin{lemma}\label{lem:strand-decomposition}
If every $\cP_i$ has a solution then $\cP$ has a solution.
\end{lemma}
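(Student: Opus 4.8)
The plan is to build a solution of $\cP$ by combining the solutions of the $\cP_i$'s, one strand at a time, starting from a consistent collection of as-components. Suppose $\vf_i$ is a solution of $\cP_i$ for each $i\in[k]$. Since $\dl'_i(v)=A'_v$ for $v\in W_i$, the map $\vf=\bigcup_i\vf_i$ assigns to each variable $v$ a value $\vf(v)\in A'_v$. I claim $\vf$ is a solution of $\cP$. Fix a constraint $\ang{(\vc vn),\rel}\in\cC$. By construction of the $\cP_i$, for each strand $W_i$ the restriction $\vf$ gives a tuple lying in $\pr_{J_i}\rel\cap\prod_{\ell\in J_i}A'_{v_\ell}$, where $J_i=\{j\in[n]\mid v_j\in W_i\}$; indeed $\cC_i$ contains precisely the projection constraint $\ang{(v_{j}: v_j\in W_i),\pr_{J_i}\rel}$, and $\vf_i$ satisfies it.

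The key step is then to invoke Lemma~\ref{lem:rectangularity}. The set $\{J_i\cap\{j:v_j\text{ appears in the scope}\}\}$ refines the partition of $[n]$ into strands with respect to $A'_{v_1}\zd A'_{v_n}$ of this particular constraint: if two positions of the scope are in the same strand of the constraint $\rel$, then by definition of a strand of $\cP$ the corresponding variables lie in the same $W_i$. Hence each strand $I$ of $\rel$ is contained in a single $J_i$, and the tuple $\pr_I(\vf(\bs))$ lies in $\pr_I\rel\cap\prod_{\ell\in I}A'_{v_\ell}=\rel_I$ in the notation of Lemma~\ref{lem:rectangularity}. Since $\rel\cap(A'_{v_1}\tm\ldots\tm A'_{v_n})\ne\eps$ (the collection is consistent), Lemma~\ref{lem:rectangularity} gives $\prod_{\text{strands }I}\rel_I\sse\rel$, and therefore the full tuple $\vf(\bs)=(\vf(v_1)\zd\vf(v_n))$, which is the concatenation of its projections on the strands of $\rel$, belongs to $\rel$. (If a variable occurs several times in the scope, the repeated positions are trivially in the same strand of $\rel$ and receive the same value, so this causes no difficulty.)

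The main obstacle, and the only place real work is hidden, is the bookkeeping that the strands of $\cP$ (defined via a transitivity condition across constraints) do indeed refine, on each individual constraint scope, the strands of that constraint's relation in the sense required by Lemma~\ref{lem:rectangularity}. This is essentially immediate from the definitions: being in the same strand of a single constraint is one of the generating relations for the equivalence "same strand of $\cP$", so the $\cP$-strand partition is coarser on $V$, hence its trace on any scope is coarser than the constraint's own strand partition — which is exactly the direction needed so that each $\rel$-strand sits inside one $W_i$. Once this is observed, the conclusion is a direct application of Lemma~\ref{lem:rectangularity} to each constraint separately, and no further computation is required.
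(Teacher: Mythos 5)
Your proof is correct and is essentially the paper's own argument: glue the solutions $\vf_i$ along the strands and apply Lemma~\ref{lem:rectangularity} to each constraint relation, using that each constraint's strand partition refines the trace of the $\cP$-strand partition on its scope (the paper states this in one line; you supply the bookkeeping). One small inaccuracy: repeated occurrences of a variable in a scope need not lie in the same strand of $\rel$ (strands are defined from $\rel$ and the as-components, not from variable identity), but this is harmless since all such positions lie in the same $W_i$ and your argument only needs each $\rel$-strand to sit inside a single $W_i$.
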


\begin{proof}
Let $\vf_i$ be a solution of $\cP_i$. Then applying 
Lemma~\ref{lem:rectangularity} to each constraint relation of $\cP$
we conclude that $\vf$ such that $\vf(v)=\vf_i(v)$ whenever $v\in W_i$
is a solution for $\cP$.
\end{proof}

If for some $i\le k$ the problem $\cP_i$ has no solution, then $\cP$ has
no solution $\vf$ with $\vf(v)\in A'_v$ for any $v\in W_i$. Therefore,
$\cP$ can be reduced to a smaller problem $(V,\dl',\cC')$, where
$\dl'(v)=\dl(v)-A'_v$ if $v\in W_i$ and $\dl'(v)=\dl(v)$ otherwise; and 
every constraint relation $\rel$ of $\cP$ is obtained from the corresponding
constraint relation of $\cP$ by restricting it to the new domains.

It remains to show that such a consistent collection of as-components 
always exists, and to demonstrate how it can be found. 

Let $W\sse V$. A \emph{partial consistent collection} on $W$ is 
a collection of as-components $A'_v\sse\dl(v)$ for each $v\in W$
such that for any constraint $\ang{\bs,\rel}$, where $\bs\cap W=(\vc vn)$ 
the set $\pr_{\bs\cap W}\rel\cap(A'_{v_1}\tm\ldots\tm A'_{v_n})$ is 
nonempty.

\begin{prop}\label{pro:consistent-components}
Let $\cP=(V,\dl,\cC)$ be a 3-minimal instance and $W\sse V$. Then any 
partial consistent collection on $W$ can be extended to a consistent collection.
\end{prop}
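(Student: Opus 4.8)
The plan is to extend the partial consistent collection one variable at a time, so it suffices to show that if $A'_v$, $v \in W$, is a partial consistent collection and $u \in V - W$, then some as-component $A'_u \sse \dl(u)$ can be added to keep the collection consistent. The key observation is that, because $\cP$ is 3-minimal, the obstructions to extending the collection live entirely inside the binary projections $\relo_{\{u,w\}}$ for $w \in W$, so I would work with these binary relations. First I would note that for each $w \in W$ the relation $\relo_{\{u,w\}} \le \dl(u) \tm \dl(w)$ is subdirect (3-minimality forces $\pr_u \relo_{\{u,w\}} = \dl(u)$), and that restricting it to $\dl(u) \tm A'_w$ still leaves a subdirect-on-the-$A'_w$-side relation whose first projection is some subalgebra $C_w \sse \dl(u)$ — this $C_w$ is the set of candidate values at $u$ compatible with the already-chosen component at $w$. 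The goal is to produce one as-component $A'_u$ of $\dl(u)$ that meets $C_w$ for every $w \in W$.

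Next I would identify what is really needed: an as-component $A'_u$ such that for every $w \in W$ the relation $\relo_{\{u,w\}}$ has a tuple in $A'_u \tm A'_w$. Here I would invoke the rectangularity/connectivity machinery of Section~\ref{sec:rectangularity} together with Lemma~\ref{lem:linked-rectangularity}. The natural approach is to look at the ``compatibility'' structure on the as-components of $\dl(u)$ induced by each $w$: define, for each $w$, the set $\cF_w$ of as-components $D$ of $\dl(u)$ with $\relo_{\{u,w\}} \cap (D \tm A'_w) \ne \eps$, and show that this family is closed under the graph-theoretic operations that let us ``move'' from one as-component to another, so that across all $w \in W$ we can still find a common as-component. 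The role of 3-minimality is to guarantee that a choice that is locally consistent on every pair $\{u,w\}$ is automatically consistent with the constraints of $\cP$ restricted to $W \cup \{u\}$: any constraint $\ang{\bs,\rel}$ with $\bs \cap (W\cup\{u\}) = (\vc vn)$ has its projection determined, via 3-minimality, by the binary projections, and Lemma~\ref{lem:rectangularity} (applied to the strand decomposition of that projected relation with respect to the as-components chosen at the $v_i$) shows the corresponding box intersects $\rel$.

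The technical heart — and the step I expect to be the main obstacle — is showing that the as-component chosen at $u$ can be made simultaneously compatible with \emph{all} $w \in W$, rather than with each one separately. For a single $w$ this is immediate, but two different $w$'s may push toward different as-components of $\dl(u)$. I would handle this by fixing one $w_0 \in W$, choosing a tentative $A'_u \in \cF_{w_0}$, and then arguing that for any other $w$ one can either show $A'_u \in \cF_w$ directly, or else adjust the previously chosen components among $W$ — but since those are already fixed, the real content is that $\cF_w$ does not depend on $w$ in the relevant way. Concretely I expect the argument to run: if $A'_u \notin \cF_w$, pick $D \in \cF_w$ and a path in $\cG(\dl(u))$ (using semilattice and affine edges, which exist between any two as-components by the as-component definition and Proposition~\ref{pro:uniformity}) from $A'_u$ to $D$; push this path through $\relo_{\{u,w_0\}}$ using Lemma~\ref{lem:path-extension} to see that $D \in \cF_{w_0}$ as well, and likewise push it through the other binary relations, deriving that in fact the whole family of as-components reachable this way lies in every $\cF_w$. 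Choosing $A'_u$ inside that common family completes the extension step, and iterating over $V - W$ proves the proposition.
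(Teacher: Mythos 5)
There is a genuine gap, and it sits exactly where you predicted the ``technical heart'' would be. Your mechanism for making one as-component of $\dl(u)$ compatible with \emph{all} $w\in W$ rests on the claim that semilattice and affine edges ``exist between any two as-components by the as-component definition,'' and this is the opposite of what the definition gives. If $B$ is an as-component then every edge between $b\notin B$ and $a\in B$ is majority or semilattice directed \emph{into} $B$; hence for two distinct as-components $D_1,D_2$ of $\dl(u)$ every edge between them is forced (by applying the definition from both sides) to be majority. So there is no path, in the paper's sense of a sequence of semilattice/affine edges, leading from $A'_u$ to another as-component $D$, and Lemma~\ref{lem:path-extension} has nothing to push through the binary relations $\relo_{\{u,w\}}$. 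With that, your argument that the families $\cF_w$ share a common member collapses, and the simultaneous-compatibility step --- which is the real content of the proposition, since pairwise compatibility of $u$ with each $w$ separately is easy --- is left unproved. (Your outer framing is fine: extending one variable at a time, reducing per-constraint consistency to pairwise data via 3-minimality and then invoking Lemma~\ref{lem:CRT} on each projected constraint relation is essentially sound, and the paper itself remarks after the proposition that the statement yields exactly such an iterative procedure.)

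For comparison, the paper obtains the simultaneity by a quite different device, adapted from Jeavons, Cohen and Cooper: it builds an auxiliary instance $\cP'$ with a fresh variable $v'$ and one \emph{disjoint} copy of $V-\{v\}$ per constraint, so that distinct constraints of $\cP'$ overlap only in $v'$; it then encodes the whole solution set of $\cP'$ as a single relation $\rel$. Because the copies are disjoint, a partial solution on any pair of coordinates (obtained from 3-minimality, via a partial solution on a triple containing $v$) extends constraint-by-constraint to a full solution of $\cP'$, which verifies the \emph{pairwise} consistency hypothesis of Corollary~\ref{cor:CRT} for $\rel$. That corollary (resting on Lemmas~\ref{lem:max-extension}, \ref{lem:connectivity}, \ref{lem:rectangularity} and \ref{lem:CRT}) then produces one tuple of $\rel$, i.e.\ one solution of $\cP'$, that hits all the chosen as-components and a single as-component at $v'$ simultaneously --- which is exactly the common $A'_v$ you need. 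If you want to salvage your approach, you need some substitute for this global gluing step; arguing locally inside $\cG(\dl(u))$ cannot work, because the graph structure between distinct as-components (all majority edges) carries no usable information.
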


Observe that Proposition~\ref{pro:consistent-components} implies
that a consistent collection always exists (it suffices to start with 
empty $W$). It also gives a method of finding a consistent collection:
Let $V=\{\vc vn\}$ and choose any as-component $A'_{v_1}$. Then, 
if a partial consistent collection $A'_{v_1}\zd A'_{v_k}$ is chosen,
Proposition~\ref{pro:consistent-components} guarantees that we can
find $A'_{v_{k+1}}$ such that $A'_{v_1}\zd A'_{v_k},A'_{v_{k+1}}$ 
is partial consistent collection.

We start with a statement that is quite similar to 
Proposition~\ref{pro:consistent-components}, but uses relations rather 
than CSP instances. (Partial) consistent collections for relations are 
defined as follows: Let $\rel\le\zA_1\tm\ldots\tm\zA_n$, as-components 
$\vc{A'}n$ is a consistent collection if for any $i,j\in[n]$ the set 
$\pr_{i,j}\rel\cap(A'_i\tm A'_j)$ is non-empty.

\begin{lemma}\label{lem:max-extension}
Let $\rel$ be an ($n$-ary) relation and $I\sse[n]$. For any $\ba\in\pr_I\rel$
such that $\ba[i]$, $i\in I$, belongs to an as-component, there is  $\bb\in\rel$ 
such that $\bb[i]$, $i\in[n]$, belongs to an as-component and $\bb[i]=\ba[i]$ 
for $i\in I$. 
\end{lemma}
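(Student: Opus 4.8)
The plan is to prove Lemma~\ref{lem:max-extension} by induction on $n-|I|$, the number of coordinates that still have to be assigned a value in an as-component. The base case $n-|I|=0$ is trivial: take $\bb=\ba$. For the induction step, it suffices to treat a single new coordinate: given $\ba\in\pr_I\rel$ with every $\ba[i]$ ($i\in I$) in an as-component, find an extension $\ba^+\in\pr_{I\cup\{j\}}\rel$ (for some $j\in[n]-I$) agreeing with $\ba$ on $I$ and with $\ba^+[j]$ in an as-component of $\zA_j$; then apply the induction hypothesis to $\ba^+$ and $I\cup\{j\}$. So the whole lemma reduces to the one-coordinate claim, which in turn (after renaming and projecting) is the statement about a binary subdirect relation $\relo\le\zA'\tm\zA_j$, where $\zA'$ is a subdirect product of conservative algebras and $\ba$ is (essentially) a single element $a\in\zA'$ lying in an as-component $C$ of $\zA'$ (namely the product of the chosen as-components, which is an as-component of $\pr_I\rel$ by Lemma~\ref{lem:connectivity}): we must show $\relo[a]$ meets some as-component of $\zA_j$.

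The key step is therefore: if $\relo\le\zA'\tm\zB$ is subdirect and $a\in\zA'$, then $\relo[a]=\{b\mid(a,b)\in\relo\}$ contains an element of an as-component of $\zB$. I would argue as follows. Suppose not; pick $b\in\relo[a]$. Since $b$ is not in an as-component, there are $c\in\zB$ with $bc$ a semilattice edge oriented from $b$ (i.e.\ $b\le c$) or $bc$ an affine edge. Take $c'\in\zA'$ with $(c',c)\in\relo$, and form $p$ of the tuples $\cl{c'}{c}$ and $\cl{a}{b}$: the edge analysis used throughout Section~3 (exactly as in the proofs of Lemma~\ref{lem:as-subdirect}, Lemma~\ref{lem:buket}, and Lemma~\ref{lem:linked-rectangularity}) shows that $p\left(\cl{c'}{c},\cl ab\right)$, and, if needed, a further product with $\cl ab$ via $\cdot$ and an application of $h$ in the affine case, lands on a tuple of the form $\cl{a}{c}$ or $\cl{a}{d}$ with $d$ strictly ``closer'' to an as-component of $\zB$ than $b$ is. More precisely, using that $a\le a$ and $aa$ is not an edge, the first coordinate stays fixed at $a$ while the second coordinate moves along a semilattice-or-affine edge in the direction of the as-component. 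Iterating (the set of as-components is a partition of $\zB$ and each semilattice step strictly decreases the number of vertices reachable from the current point by outgoing semilattice edges, while affine steps can be dealt with by the $h$-trick exactly as in Lemma~\ref{lem:rectangularity}), this process terminates at a point of $\relo[a]$ in an as-component, contradicting the assumption.

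The main obstacle I expect is bookkeeping the termination of this ``push toward an as-component'' process: unlike the linked/rectangular lemmas, here the ambient algebra $\zB$ is a general subdirect product, so one must argue coordinatewise that the composed operations $p$, $\cdot$, $h$ (from Proposition~\ref{pro:uniformity}) really do move $b$ along a path in $\cG(\zB)$ whose edges are semilattice or affine, and that such a path cannot cycle — equivalently, that following outgoing semilattice edges and affine edges eventually reaches a vertex all of whose ``outward'' edges leave the reachable set, i.e.\ an as-component. This is precisely the kind of edge-label case analysis already carried out in Lemmas~\ref{lem:as-subdirect}--\ref{lem:rectangularity}, so the step is routine in spirit; the care needed is only in phrasing the monovariant. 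An alternative, cleaner route for this step is to invoke Lemma~\ref{lem:path-extension}: first show directly in the single algebra $\zB$ that every vertex has a path to some vertex in an as-component (a ``path to an as-component'' statement for $\cG(\zB)$ alone, proved by the same monovariant), then lift that path through the subdirect relation $\relo$ with the first coordinate frozen at $a$ by a trivial modification of Lemma~\ref{lem:path-extension}. Either way the endpoint of the lifted path is the desired $\bb$.
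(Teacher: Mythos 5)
Your overall skeleton (add one coordinate at a time, identify the product of the chosen as-components with an as-component of $\pr_I\rel$ via Lemma~\ref{lem:connectivity}, then push the new coordinate toward an as-component) is in the spirit of the paper's proof, but the core step has a genuine gap. First, the ``key step'' as you state it --- $\relo\le\zA'\tm\zB$ subdirect and $a\in\zA'$ arbitrary implies $\relo[a]$ meets an as-component of $\zB$ --- is false: take $\zA'=\zB=\{0,1\}$ with the semilattice edge $0\le 1$ and $\relo=\{(0,0),(1,0),(1,1)\}$; on a semilattice pair all of $f,g,h,p$ are built from the join, and $\relo$ is closed under the coordinatewise join, is subdirect, yet $\relo[0]=\{0\}$ misses the as-component $\{1\}$ of $\zB$. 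So the hypothesis that $a$ lies in an as-component of $\zA'$, which you drop when isolating the key step, is essential and must actually be used.

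Second, even with that hypothesis the mechanism you describe does not work: neither ``the first coordinate stays fixed at $a$'' nor a ``trivial modification of Lemma~\ref{lem:path-extension} with the first coordinate frozen at $a$'' is available. Applying $p$, $\cdot$ or $h$ to $\cl{c'}{c}$ and $\cl{a}{b}$ moves the first coordinate to values such as $p(c',a)$ or $a\cdot c'$, which differ from $a$ as soon as $ac'$ is, say, an affine edge inside the as-component; and in the example above the one-edge path from $0$ to $1$ in $\zB$ cannot be lifted with the first coordinate frozen at $0$. What is true, and what the paper uses, is weaker: because the as-component of $a$ has no outgoing semilattice or affine edges, the first block can only drift \emph{within} that as-component. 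This forces an extra closing step that your proposal omits: after producing some $(\ba',\bc_2)\in\rel$ with $\ba'$ in the as-component of $\ba$ and $\bc_2$ having all coordinates in as-components (the paper does this in one shot, multiplying by a tuple from an as-component and applying $p$, using that all relevant edges are semilattice or majority), one must use connectivity of as-components (Lemma~\ref{lem:connectivity}) together with path lifting to walk $\ba'$ back to $\ba$, noting that along such a path the second block moves only along semilattice/affine edges and hence cannot leave its as-component. Without this return step, or a correct replacement for the freezing claim, your argument never reaches a tuple extending the given $\ba$ itself; the vagueness of your termination monovariant on affine edges is a lesser, repairable issue by comparison.
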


\begin{proof}
Consider $\rel$ as a subdirect product of $\rel_1=\pr_I\rel$ and 
$\rel_2=\pr_{[n]-I}\rel$. By Lemma~\ref{lem:connectivity} 
$\ba$ belongs to an as-component of $\rel_1$, and it suffices 
to find $\bb$ in an as-component of $\rel_2$ such that $(\ba,\bb)\in\rel$. 

Let $(\ba,\bb)\in\rel$ for some $\bb\in\rel_2$. If $\bb[i]$ 
belongs to an as-component, we may replace $I$ with $I\cup\{i\}$, so
assume $\bb[i]$ does not belong to an as-component for $i\in[n]-I$.
Take $\bc=(\bc_1,\bc_2)\in\rel$ with $\bc_1\in\rel_1$ and $\bc_2$
from an as-component of $\rel_2$. As $\bb[i]$ is not in any as-component,
$\bb[i]\bc_2[i]$ is a semilattice or majority edge for $i\in[n]-I$. 
Letting $\bd=\bb\cdot\bc_2$ we have that $\bb\bd$ is a semilattice edge 
and $\bd\bc_2$ is a majority edge. Observe that 
$$
\cl{\ba\cdot\bc_1}{\bd}=\cl\ba\bb\cdot\cl{\bc_1}{\bc_2}\in\rel 
\quad\hbox{and}\quad
\cl{p(\ba\cdot\bc_1,\bc_1)}{\bc_2}=
p\left(\cl{\ba\cdot\bc_1}{\bd},\cl{\bc_1}{\bc_2}\right)\in\rel,
$$
and that $p(\ba\cdot\bc_1,\bc_1)$ belongs to the same 
as-component as $\ba$. Thus, by Lemma~\ref{lem:connectivity}
$(\ba,\bc_3)\in\rel$ for some $\bc_3$ from the same as-component 
as $\bc_2$.
\end{proof}

Lemma~\ref{lem:max-extension} implies that for any relation 
there is a consistent collection. Indeed, if $\ba\in\rel$ is such that 
$\ba[i]$ belongs to an as-component $A'_i$, then $\vc{A'}n$
is a consistent collection.

\begin{lemma}\label{lem:CRT}
Let $\vc{A'}n$ be a consistent collection for an $n$-ary relation $\rel$.
Then $(A'_1\tm\ldots\tm A'_n)\cap\rel\ne\eps$.
\end{lemma}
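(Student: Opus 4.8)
The plan is to prove Lemma~\ref{lem:CRT} by induction on $n$, reducing the $n$-ary case to the binary case via projections, and to handle the binary case by combining Lemma~\ref{lem:linked-rectangularity} with a factorization through the linking congruences $\tol_1,\tol_2$.

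First I would dispose of the base cases: $n=1$ is trivial, and for $n=2$ the claim is that if $\pr_{1,2}\rel\cap(A'_1\tm A'_2)\ne\eps$ (which for a binary relation just says $\rel\cap(A'_1\tm A'_2)\ne\eps$) then $A'_1\tm A'_2\sse\rel$ --- wait, that is too strong in general, so the binary case must itself be proved more carefully. What is actually true is only that the intersection is nonempty, which is the hypothesis; so for $n=2$ there is nothing to prove. The real work is the inductive step: assuming the statement for all relations of arity less than $n$, I view $\rel$ as a subdirect product of $\zA = \pr_{[n-1]}\rel$ and $\zA_n$ (after first passing to $\pr_{[n-1]}\rel$, which may itself not be a full direct product, but the as-components $A'_1,\dots,A'_{n-1}$ restrict to an as-component of $\pr_{[n-1]}\rel$ by Lemma~\ref{lem:connectivity}). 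By Lemma~\ref{lem:connectivity}, $\rel'' = \pr_{[n-1]}\rel \cap (A'_1\tm\dots\tm A'_{n-1})$ is a nonempty as-component of $\pr_{[n-1]}\rel$, so it suffices to show that $\rel\cap(\rel''\tm A'_n)$ (a subset of $\rel$ viewed as a binary relation over $\pr_{[n-1]}\rel$ and $\zA_n$) is nonempty; then any witnessing tuple lies in $(A'_1\tm\dots\tm A'_n)\cap\rel$.

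The binary step is where the main content lives, and it is the step I expect to be the obstacle. Here I have a binary subdirect product $\relo \le \zB \tm \zC$ (with $\zB = \pr_{[n-1]}\rel$, $\zC=\zA_n$), as-components $B'$ of $\zB$ and $C'$ of $\zC$, and I know $\pr_1\relo\cap B'\ne\eps$ and $\pr_2\relo\cap C'\ne\eps$ --- this is the hypothesis specialized to the pair of coordinate blocks, using that $B'$ came from an honest consistent collection; but for a binary relation the pairwise-nonemptiness hypothesis of Lemma~\ref{lem:CRT} is exactly $\relo\cap(B'\tm C')\ne\eps$ only when we already know what we want. So instead I factor: if $\relo$ is linked, Lemma~\ref{lem:linked-rectangularity} gives $B'\tm C'\sse\relo$ provided $\relo\cap(B'\tm C')\ne\eps$, and nonemptiness of that intersection is what I must establish --- I do this by pushing through $\tol_1(\relo)$ and $\tol_2(\relo)$: pick $\bb\in B'$ with $\bb\in\pr_1\relo$ and $c\in C'$ with $c\in\pr_2\relo$, take $(\bb,c')\in\relo$ and $(\bb',c)\in\relo$, and use linkedness plus the as-component closure properties (the same $p$-, $\cd$-, and $h$-manipulations used in Lemmas~\ref{lem:as-subdirect}, \ref{lem:buket}, \ref{lem:linked-rectangularity}) to produce a tuple in $\relo\cap(B'\tm C')$. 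If $\relo$ is not linked, I quotient by the nontrivial congruence $\th = \tol_1(\relo)$ (or $\tol_2$), obtaining a strictly smaller relation $\relo/\th \le (\zB/\th)\tm\zC$ to which the induction hypothesis applies, lift the resulting tuple back using that as-components behave well under congruences (the observation proved inside Lemma~\ref{lem:connectivity}), and again apply the path-lifting/$p$-operation trick to land inside $B'\tm C'$.

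The hard part will be verifying the nonemptiness of $\relo\cap(B'\tm C')$ in the linked case, i.e.\ that the linking congruences genuinely let us "connect" a $B'$-witness on the left to a $C'$-witness on the right while staying inside the as-components: this is where one must be careful that each elementary step ($p(\bx,\by)$, $\bx\cd\by$, $h$) produces an edge that is semilattice or affine \emph{and} keeps at least one endpoint inside the relevant as-component, exactly as in the proofs of Lemmas~\ref{lem:buket} and~\ref{lem:linked-rectangularity}. Everything else --- the reduction from arity $n$ to arity $2$, the treatment of non-simple factors via maximal congruences --- is routine bookkeeping of the kind already carried out in Lemma~\ref{lem:connectivity}. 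I would also remark that Lemma~\ref{lem:max-extension} already guarantees a consistent collection exists for any relation, so Lemma~\ref{lem:CRT} is the companion statement showing such a collection actually "assembles" into a common tuple, which is precisely what is needed to run the algorithm of Section~\ref{sec:as-exclusion}.
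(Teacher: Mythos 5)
Your reduction from arity $n$ to a binary relation is exactly where the argument breaks, and the gap cannot be patched by the linkedness machinery you invoke. After bundling the first $n-1$ coordinates, the only data you retain is: a subdirect $\relo\le\zB\tm\zC$ (with $\zB=\pr_{[n-1]}\rel$, $\zC=\zA_n$), a nonempty as-component $B'$ of $\zB$ (from the induction hypothesis), an as-component $C'$ of $\zC$, and possibly linkedness. That data does not imply $\relo\cap(B'\tm C')\ne\eps$. Concretely, let $\zB=\{0,1\}$ and $\zC=\{a,b\}$ be two-element algebras whose unique edge is majority, and let $\relo=\{(0,a),(1,a),(1,b)\}$; on majority edges $f,p,h$ act as projections and $g$ as the majority, and a binary implication-type relation is preserved by all of these, so $\relo$ is a subalgebra, subdirect and linked, with singleton as-components, yet $\relo\cap(\{0\}\tm\{b\})=\eps$. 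This is precisely why Lemma~\ref{lem:linked-rectangularity} carries the nonemptiness of the intersection as a hypothesis: it cannot manufacture it. The hypothesis of Lemma~\ref{lem:CRT} is pairwise information about the original coordinates (each $\pr_{i,n}\rel$ meets $A'_i\tm A'_n$), and it is exactly this per-coordinate information that is discarded when coordinates $1\zd n-1$ are fused into one factor; your sketch of the binary step never uses it, so it cannot close the gap. (In the toy example the pair hypothesis coincides with the desired conclusion, so it is not a counterexample to the lemma, only to your inductive step.)

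The paper's proof proceeds differently and contains the combining idea your proposal lacks. It strengthens the induction to the statement that for every $I\sse[n]$ there is $\ba\in\rel$ with $\ba[i]\in A'_i$ for all $i\in I$, the base case $|I|\le 2$ being the consistency hypothesis. In the step it takes three tuples $\ba_1,\ba_2,\ba_3$ good on $I-\{1\}$, $I-\{2\}$, $I-\{3\}$ respectively, uses Lemma~\ref{lem:connectivity} and the rectangularity Lemma~\ref{lem:rectangularity} to align them so that $\ba_2[1]=\ba_3[1]$, $\ba_1[2]=\ba_3[2]$, $\ba_1[3]=\ba_2[3]$, and then combines them: if some $\ba_j[j]\ba_k[j]$ is a semilattice edge, the product $\ba_j\cdot\ba_k$ is the required witness; if all these edges are majority, then $g(\ba_1,\ba_2,\ba_3)$ is, since $g$ acts as a majority on coordinates $1,2,3$ and keeps the remaining coordinates of $I$ inside their as-components. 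This three-tuple combination is the essential mechanism converting pairwise consistency into a global witness; nothing in your plan plays that role, so as written the proposal does not prove the lemma.
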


\begin{proof}
We prove by induction that for any $I\sse [n]$ there is $\ba\in\rel$
such that $\ba[i]\in A'_i$ for $i\in I$. Since $\vc{A'}n$ is a consistent collection,
the statement is true for any $I$ with $|I|\le2$. Suppose it is true for 
any $J\sse [n]$ such that $|J|<|I|$. Without loss of generality assume 
$1,2,3\in I$. Let $J_1= I-\{1\}$, $J_2=I-\{2\}$, $J_3=I-\{3\}$, and let
$\ba_1,\ba_2,\ba_3\in\rel$ such that $\ba_j[i]\in A'_i$ for all $i\in J_j$.
If one of $\ba_j[j]\in A'_j$, $j\in\{1,2,3\}$, then we are done; assume this
is not the case. By Lemma~\ref{lem:connectivity} 
$(A'_1\tm A'_3\tm\ldots\tm A'_n)\cap\pr_{\{1,3\zd n\}}\rel$ and 
$(A'_1\tm A'_2\tm A'_4\tm\ldots\tm A'_n)\cap\pr_{\{1,2,4\zd n\}}\rel$ are 
subdirect products of $A'_1,A'_3\zd A'_n$ and $A'_1,A'_2,A'_4\zd A'_n$, respectively.
Therefore $\ba_2,\ba_3$ can be chosen so that $\ba_1[3]=\ba_2[3]$ 
and $\ba_1[2]=\ba_3[2]$. While $\ba_1[i],\ba_2[i]\in A'_i$ for all 
$i\in\{3\zd n\}$, $\ba_2[1]\in A'_1$, and $\ba_1[1]\not\in A'_1$, by
Lemma~\ref{lem:rectangularity} 
$$
(A'_1\tm A'_3\tm\ldots\tm A'_n)\cap\pr_{\{1,3\zd n\}}\rel=
A'_1\tm\left[(A'_3\tm\ldots\tm A'_n)\cap\pr_{\{3\zd n\}}\rel\right].
$$
Hence, $\ba_2$ can be assumed such that $\ba_2[1]=\ba_3[1]$.

If $\ba_j[j]\ba_k[j]$ is a semilattice edge for some $j,k\in\{1,2,3\}$ 
then the tuple $\ba_j\ba_k$ satisfies the required conditions. It remains 
to consider the case when $\ba_j[j]\ba_k[j]$ is a majority edge for any
$j,k\in\{1,2,3\}$. Consider $\bb=g(\ba_1,\ba_2,\ba_3)$. As 
$\ba_1[i],\ba_2[i],\ba_3[i]\in A'_i$ for $i\in\{4\zd n\}$, we have 
$\bb[i]\in A'_i$ in this case. Since $\ba_2[1]=\ba_3[1]$ and 
$\ba_1[1]\ba_2[1]$ is a majority edge, $\bb[1]=\ba_2[1]$. 
Similarly, $\bb[2]=\ba_1[2]\in A'_2$ and $\bb[3]=\ba_1[3]\in A'_3$.
\end{proof}

\begin{corollary}\label{cor:CRT}
Let $\rel\le\tc\zA n$, and let $\vc{A'}{n-1}$ be a partial consistent  
collection, $A'_i\sse A_i$. Then it can be extended to a consistent 
collection for $\rel$.
\end{corollary}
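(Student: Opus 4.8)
The plan is to first produce a single tuple of $\rel$ lying in the prescribed as-components on the first $n-1$ coordinates, then push it through to coordinate $n$ using Lemma~\ref{lem:max-extension}, and finally read off the missing as-component $A'_n$ from that tuple.

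First I would observe that, since $\vc{A'}{n-1}$ is a partial consistent collection, for every $i,j\in[n-1]$ the set $\pr_{i,j}\rel\cap(A'_i\tm A'_j)=\pr_{i,j}(\pr_{[n-1]}\rel)\cap(A'_i\tm A'_j)$ is non-empty; hence $\vc{A'}{n-1}$ is a consistent collection for the $(n-1)$-ary relation $\pr_{[n-1]}\rel$. By Lemma~\ref{lem:CRT} there is $\ba\in\pr_{[n-1]}\rel$ with $\ba[i]\in A'_i$ for all $i\in[n-1]$; in particular each $\ba[i]$ lies in an as-component of $\zA_i$. Next I would apply Lemma~\ref{lem:max-extension} to $\rel$ with $I=[n-1]$ and this tuple $\ba$: it yields $\bb\in\rel$ with $\bb[i]=\ba[i]$ for $i\in[n-1]$ and with $\bb[n]$ lying in some as-component of $\zA_n$; let $A'_n$ be that as-component. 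It then remains to check that $\vc{A'}n$ is a consistent collection for $\rel$: for $i,j\in[n-1]$ this is exactly the hypothesis that $\vc{A'}{n-1}$ is a partial consistent collection, while for a pair $\{i,n\}$ with $i\in[n-1]$ the tuple $\pr_{i,n}\bb=(\bb[i],\bb[n])$ witnesses that $\pr_{i,n}\rel\cap(A'_i\tm A'_n)$ is non-empty.

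There is essentially no obstacle here beyond bookkeeping; the one point to be careful with is invoking Lemma~\ref{lem:CRT} on $\pr_{[n-1]}\rel$ rather than on $\rel$ itself, and noting that the $A'_i$, being as-components of $\zA_i$, remain legitimate entries of a consistent collection for that projection (their non-empty intersection with the relevant projections being exactly what the partial consistent collection hypothesis supplies). Everything else is a direct combination of Lemmas~\ref{lem:CRT} and~\ref{lem:max-extension}.
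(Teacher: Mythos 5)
Your proof is correct and follows essentially the same route as the paper: apply Lemma~\ref{lem:CRT} to obtain a tuple of $\pr_{[n-1]}\rel$ lying in $A'_1\tm\ldots\tm A'_{n-1}$, then use Lemma~\ref{lem:max-extension} to extend it to a tuple of $\rel$ whose last coordinate lies in an as-component of $\zA_n$, which supplies $A'_n$. Your additional bookkeeping (checking the pairwise conditions explicitly) is fine but not a departure from the paper's argument.
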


\begin{proof}
By Lemma~\ref{lem:CRT} there is $\ba\in(\tc{A'}{n-1})\cap\pr_{[n-1]}\rel$.
Therefore, by Lemma~\ref{lem:max-extension} $(\ba,a)\in\rel$ for some
$a$ from an as-component of $\zA_n$.
\end{proof}

By $\relo_{v,w},\relo_{u,v,w}$ we denote the sets of partial solutions of $\cP$ 
on $\{v,w\}$ and $\{u,v,w\}$, respectively.

\begin{proof}(of Proposition~\ref{pro:consistent-components})
The proof we give here is a modification of the proof
of Theorem~3.5 from \cite{Jeavons98:consist}.

Suppose $\cP=(V,\dl,\cC)$ is a minimal instance that does not satisfy 
the conclusion of the proposition. Since we assume $\cP$ 3-minimal, 
$|V|>3$. Pick $v\in V$; our 
assumption implies that $\cP_{V-\{v\}}$ satisfies the conclusion 
of the proposition, but there is a consistent collection 
$\{A'_w\sse\dl(w)\mid w\in W=V-\{v\}\}$ such that it cannot be extended to 
a consistent collection including some $A'_v\sse\dl(v)$.

Let $\cC=\{\ang{\bs_1,\rel_1}\zd
\ang{\bs_q,\rel_q}\}$. To obtain the desired contradiction we shall
construct a problem $\cP'$ which also has $q$ constraints, with the
same constraint relations, but with different constraint scopes.

We define the set of variables of $\cP'$ to be the union of $\{v'\}$
and $q$ disjoint copies $\vc Wq$ of $W$, where $W_i=\{w^i_1\zd
w^i_k\}$.  
Now, for each $i\in[q]$, we define a mapping $f_i\colon W\to W_i$ by
setting $f_i(w_j)=w^i_j$, and extend each $f_i$ to $v$ by setting
$f_i(v)=v'$. The set of constraints of $\cP'$ is then defined as
$$
\{\ang{f_1(\bs_1),\rel_1}\zd\ang{f_q(\bs_q),\rel_q}\}.
$$

Then let the $q\cdot k$-ary relation $\rel$ be defined as follows
\vspace*{-1.5mm}
\begin{eqnarray*}
\rel &=& \{(\vf(f_1(w_1))\zd\vf(f_1(w_k))\zd
\vf(f_q(w_1))\zd\vf(f_q(w_k)),\vf(v')\mid\\ 
& & \hbox{$\vf$ is a solution to $\cP'$}\}.
\end{eqnarray*}
\vspace*{-5mm}

\noindent
The collection $A'_{v_1}\zd A'_{v_k}\zd  A'_{v_1}\zd A'_{v_k}$
cannot be extended to a consistent collection for $\rel$, since $A'_{v_1}\zd A'_{v_k}$
cannot be extended to a consistent collection for $\cP$. 
However, we shall show that $\rel$
satisfies the conditions of Lemma~\ref{cor:CRT}, 
and thus derive a contradiction.

For any pair of
indices $w^{i_1}_{j_1},w^{i_2}_{j_2}$, we claim that 
$(A'_{j_1}\tm A'_{j_2})\cap\pr_{\{w^{i_1}_{j_1},w^{i_2}_{j_2}\}}\rel\ne\eps$. 
Since $\cP$ is 3-minimal any tuple $(a,b)\in(A'_{j_1}\tm A'_{j_2})\cap\relo_{w_{j_1}w_{j_2}}$
can be extended to a solution $(a,b,c)\in\relo_{w_{j_1}w_{j_2},v}$. 
Furthermore, for this solution, we can
construct a corresponding solution, $\vf$, to $\cP'$, such that
$\vf(f_{i_1}(w_{j_1}))=\vf_W(w_{j_1})$. Indeed, for any constraint 
$\ang{\bs_j,\rel_j}$, this partial solution can be extended to a tuple $\ba$ 
from $\rel_j$. Then we assign values to $f_j(w_1)\zd f_j(w_k)$ accordingly 
to $\ba$ (the variable that are not in the constraint scope $f_j(\bs_j)$ 
can be assigned values arbitrarily). 

Now, by Corollary~\ref{cor:CRT} we get a contradiction.
\end{proof}

\subsection{Maroti's reduction}\label{sec:maroti}

Reductions of the second type will be applied to instances, in which all the domains
are as-components, but some of them contain semilattice edges. We will
call such instances \emph{semilattice free}.

Maroti in \cite{Maroti10:tree} suggested a reduction for CSPs that are invariant
under a certain binary operation. Let $\gA$ be a class of finite algebras of 
similar type closed under subalgebras. 
Suppose that $\gA$ has a term operation $f$ satisfying the 
following conditions for some $\zA\in\gA$:
\begin{enumerate}
\item
$f(x,f(x,y))=f(x,y)$ for any $x,y\in\zA$;
\item
$\gA$ is closed under retracts via unary polynomials $f(a,x), f(x,a)$;
\item
for each $a\in\zA$ the mapping $x\mapsto f(a,x)$ is not surjective;
\item
the set $C$ of $a\in\zA$ such that $x\mapsto f(x,a)$ is surjective 
generates a proper subalgebra of $\zA$.
\end{enumerate}
Then $\CSP(\gA)$ is polynomial time reducible to $\CSP(\gA-\{\zA\})$.

As is easily seen, the operation $\cdot$ of a class $\gA$ of conservative algebras of
closed under subalgebras and any $\zA\in\gA$
satisfies conditions (1),(2). If the operation $a\cdot x$ is surjective for some $a$,
then $a\le x$  for all $x\in\zA$. Therefore the only case when condition 
(3) is not satisfied is when $\zA$ has such a minimal element. Finally, 
condition (4) is satisfied whenever $\zA$ is not semilattice free. 

We apply Maroti's reduction only in the case when every domain of the instance
is either semilattice free, or is an as-component. In this situation this 
reduction can be slightly modified. More precisely, we will apply it to all 
semilattice free domains rather than just one. Below we explain the reduction,
and the modifications required. The reduction uses 3 types of constructions.

Let $\cP=(V,\dl,\cC)$
be an instance of $\CSP(\gA)$ and $p_v\colon\dl(v)\to\dl(v)$, $v\in V$. 
Mappings $p_v$, $v\in V$, are said to be \emph{consistent} if for any
$\ang{\bs,\rel}\in\cC$, $\bs=(\vc vk)$, and any tuple $\ba\in\rel$ the
tuple $(p_{v_1}(\ba[1])\zd p_{v_k}(\ba[k]))$ belongs to $\rel$. 
Mappings $p_v$ are called \emph{permutational} if all of them are 
permutations, they are called \emph{idempotent} if all of them are idempotent.
For consistent idempotent mappings $p_v$ by $p(\cP)$ we denote the
\emph{retraction} of $\cP$, that is, $\cP$ restricted to the images of 
$p_v$. As is easily seen (see  \cite{Maroti10:tree}), in this case $\cP$ has
a solution if and only if $p(\cP)$ has. Also, if $p_v$ are consistent
non-permutational maps, then there are consistent idempotent maps
$p'_v$ of $\cP$ obtained by iterating $p_v$. 

The next construction uses a binary idempotent operation $\cdot$
satisfying the identity $x\cdot(x\cdot y)=x\cdot y$. Then $t(\cP)$
denotes the instance $(V',\dl',\cC')$ where
\begin{itemize}
\item
$V'=\{(v,b)\mid v\in V, b\in\dl(v)\}$ is the set of variables;
\item 
the domains are defined by the rule $\dl'(v,b)=b\cdot\dl(v)=
\{b\cdot x\mid x\in\dl(v)\}$;
\item
$\cC'$ contains constraints of two types:\\
first, for each $v\in V$, it contains the constraint $\ang{\bs_v,\rel_v}$
where $\bs_v=((v,b_1)\zd(v,b_k))$ for some enumeration $\vc bk$
of elements of $\dl(v)$, and $\rel_v=\{(b_1\cdot c\zd b_k\cdot c)\mid
c\in\dl(v)\}$;\\
second, for every $C=\ang{\bs,\rel}\in\cC$, $\bs=(\vc vk)$, and 
$\ba\in\rel$ there is constraint $D_{C,\ba}=\ang{\bs_{C,\ba},\rel_{C,\ba}}$
given by $\bs_{C,\ba}=((v_1,\ba[1])\zd(v_k,\ba[k]))$ and 
$\rel_{C,\ba}=\{\ba\cdot\bx\mid\bx\in\rel\}$.
\end{itemize}
The important property of the problem $t(\cP)$ is that if it has a solution
$\vf$ then mappings $p_v$, $v\in V$, given by $p_v(b)=\vf(v,b)$
are consistent. If $t(\cP)$ does not have a solution, $\cP$ also does not
have a solution (see \cite{Maroti10:tree})

We describe the last construction used in the reduction for conservative algebras only. Let $B_v$
be the set of all $b\in\dl(v)$ such that $ab$ is a semilattice edge for
no $a\in\dl(v)$. For every such $b$ the mapping $x\cdot b$ is 
injective, while for any other $b$ it is not. Then let $c(\cP)$
denote the restriction of $\cP$ to the sets $B_v$.

The reduction then goes as follows. First, solve $c(\cP)$. If it has a solution, it is
also a solution of $\cP$, so assume $c(\cP)$ has no solution. If $t(\cP)$
has a solution that is not permutational, then $\cP$ has consistent
non-permutational mappings, $p_v$, that can be assumed idempotent.
In this case $\cP$ has a solution if and only if $p(\cP)$ has, and can
be replaced with this smaller problem, as $\summ(p(\cP))<\summ(\cP)$.
It remains to consider the case when $p(\cP)$ has no solution that gives
rise to non-permutational mappings.

In this case, as $c(\cP)$ has no solution, for any solution $\vf$ of $\cP$,
there is $v\in V$ such that $\vf(v)=b\not\in B_v$. Then for each variable
$w\in V$ and every $\dl(w)- B_w$ we create the instance $t(\cP)$ with an
additional unary constraints $\ang{(w,b),(b\cdot d)}$, $b\in\dl(w)$. This
implies that for any consistent maps $p_v$ that arise from a solution 
to such instance, $p_w(b)=b\cdot d$, and therefore, they are not 
permutational. If there is such a non-permitational collection of 
consistent mappings, we replace $\cP$ with $p(\cP)$; otherwise we conclude
that $\cP$ has no solution.


\subsection{The algorithm and its running time}

Consider an instance $\cP=(V,\dl,\cC)$ of $\CSP(\gA)$.
Recall that it is called semilattice free if none of $\cG(\dl(v))$ 
contains a semilattice edge. Our algorithm works recursively reducing 
the domains so that eventually we obtain a semilattice free instance.

First, we show how to solve semilattice free instances. Every edge of $\cG(\dl(v))$,
$v\in V$, in this case is either majority or affine. Therefore for any $v\in V$ and 
any $a,b\in\dl(v)$ the operation 
$m(x,y,z)=h(g(x,y,z),g(y,z,x),g(z,x,y))$ is a majority operation if $ab$ 
is a majority edge, and is an affine operation if $ab$ is an affine edge. 
Thus $m$ satisfies the conditions of a \emph{generalized 
majority-minority} operation, and can be solved by the algorithm
from \cite{Dalmau06:majority-minority}.

If $\cP$ is not semilattice free, but every domain is an as-component, we
apply Maroti's reduction, as described in Section~\ref{sec:maroti}. This
reduction repeatedly reduces the problem to a smaller one, $p(\cP)$, by finding 
consistent maps $p$, and either discovers that $\cP$ does not have a solution 
or produces a problem which is semilattice free or has a proper as-component.
It also makes recursion calls with instances $t(\cP)$ and $c(\cP)$, each of which 
is either semilattice free or has a domain with a least element and therefore
with a proper as-component. 

Finally, if $\cP$ has a domain with a proper as-component, we apply the as-component 
exclusion reduction as described in Section~\ref{sec:as-exclusion}, and either find
a solution or reduce some of the domains. This reduction makes recursive calls with 
instances in which every domain is an as-component.

The correctness of this algorithm follows from the previous sections, 
\cite{Maroti10:tree}, and \cite{Dalmau06:majority-minority}. Therefore,
it remains to prove that the algorithm is polynomial time.

\begin{prop}\label{pro:complexity}
The algorithm is polynomial time in the size of $\cP$.
\end{prop}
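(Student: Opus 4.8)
The plan is to track a suitable complexity measure that strictly decreases along every recursive branch and to bound the branching. For an instance $\cP=(V,\dl,\cC)$ with $\cC=\{\ang{\bs_1,\rel_1}\zd\ang{\bs_q,\rel_q}\}$, I would set $\summ(\cP)=\sum_{v\in V}|\dl(v)|$, together with a secondary measure $\vect(\cP)$ recording the multiset of domain sizes or, more crudely, $(\summ(\cP),q,\sum_i|\rel_i|)$ ordered lexicographically. The three reductions described in Sections~\ref{sec:as-exclusion} and \ref{sec:maroti} must each be shown either to terminate directly (reaching a semilattice-free instance solved by the generalized majority-minority algorithm of \cite{Dalmau06:majority-minority}), or to recurse on instances of strictly smaller measure, with the number of recursive calls bounded polynomially in the size of $\cP$.

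The key steps, in order. First, the base case: if $\cP$ is semilattice free, no recursion occurs and the running time is that of the algorithm of \cite{Dalmau06:majority-minority} applied with the operation $m(x,y,z)=h(g(x,y,z),g(y,z,x),g(z,x,y))$, which is polynomial. Second, the as-component exclusion reduction of Section~\ref{sec:as-exclusion}: here one first computes a consistent collection (polynomial, via Proposition~\ref{pro:consistent-components} applied at most $|V|$ times), then solves each $\cP_i$ recursively on the strands $\vc Wk$; each $\cP_i$ has all domains equal to as-components, and either every domain of $\cP_i$ is semilattice free, or it is not, but in any case $\summ(\cP_i)\le\summ(\cP)$; crucially, when the reduction fails to find a solution it removes a nonempty as-component $A'_v$ from $\dl(v)$ for every $v$ in some strand $W_i$, so $\summ$ strictly drops, and this outer loop runs at most $\summ(\cP)$ times, each iteration making at most $|V|$ recursive calls to instances whose domains are as-components. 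Third, Maroti's reduction of Section~\ref{sec:maroti}, applied when every domain is an as-component but some contains a semilattice edge: the calls to $c(\cP)$ and $t(\cP)$ (and the variants of $t(\cP)$ with the extra unary constraints $\ang{(w,b),(b\cdot d)}$) produce instances in which every domain is either semilattice free or has a least element (hence a proper as-component, since for a domain with a least element $0$ the set $\{0\}$ is an as-component and the domain is strictly larger); so these land in the previous case with strictly smaller relevant measure. When $t(\cP)$ (or a variant) yields a non-permutational consistent collection $p_v$, one passes to the retraction $p(\cP)$, and since some $p_v$ is not a permutation, $\summ(p(\cP))<\summ(\cP)$; and there are at most $\summ(\cP)$ such retraction steps before the maps are forced to be permutational, at which point the algorithm either reports no solution or the domain structure has changed to semilattice free or having a proper as-component.

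The main obstacle I anticipate is the bookkeeping for Maroti's reduction: one must verify that the number of distinct instances $t(\cP)$, $c(\cP)$, and the unary-constraint variants generated at one level is polynomial — $t(\cP)$ has $\sum_v|\dl(v)|$ variables and a constraint $D_{C,\ba}$ for each $\ba\in\rel$ in each original constraint, so $|t(\cP)|$ is polynomial in $|\cP|$, and similarly the number of variants is $\sum_v(|\dl(v)|-|B_v|)$, which is polynomial — and that the iteration of $t$ and retraction does not blow up, i.e.\ that the chain of retractions $\cP\to p(\cP)\to\cdots$ has length at most $\summ(\cP)$. Once the well-foundedness of the measure and the polynomial bound on the branching factor at each node are established, a standard induction on $\summ(\cP)$ (with the lexicographic refinement to handle ties where $\summ$ does not drop but the domains become semilattice free or acquire a proper as-component) shows the total number of recursive calls, and hence the running time, is polynomial. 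I would close by assembling these bounds into a single recurrence of the form $T(\cP)\le \mathrm{poly}(|\cP|)\cdot\bigl(\text{number of children}\bigr)+T_{\text{Dalmau}}(|\cP|)$ and observing that the recursion tree has polynomially bounded depth and polynomially bounded degree, so its size, and the work at each node, are polynomial in $|\cP|$.
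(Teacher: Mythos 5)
There is a genuine gap, and it sits exactly at your concluding step. You argue that the recursion tree has ``polynomially bounded depth and polynomially bounded degree, so its size \ldots is polynomial'' --- this implication is false: with branching degree $\mathrm{poly}(|\cP|)$ and depth $\mathrm{poly}(|\cP|)$ the tree may have on the order of $\mathrm{poly}(|\cP|)^{\mathrm{poly}(|\cP|)}$ nodes, i.e.\ exponentially many. A measure such as $\summ(\cP)=\sum_{v}|\dl(v)|$ (even with lexicographic refinements) can only give a \emph{polynomial} depth bound, which is not enough here because each node spawns polynomially many children (one instance per strand, the instances $t(\cP)$, $c(\cP)$, and all the unary-constraint variants). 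Moreover, $\summ$ does not even decrease along the recursive calls that matter: $t(\cP)$ has variable set $\{(v,b)\mid v\in V,\ b\in\dl(v)\}$, so $\summ(t(\cP))$ is typically much larger than $\summ(\cP)$. The decreases you correctly identify (retraction chains $\cP\to p(\cP)\to\cdots$, removal of an as-component after a failed strand subproblem) are iterations \emph{within} a node and bound the per-node work, not the tree.

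The paper closes this gap by bounding the recursion depth by a \emph{constant} depending only on the fixed class $\gA$. The measure is $\lev(\cP)$, the maximal size of a semilattice non-free domain; it takes at most $k$ values, where $k$ is the maximal size of a semilattice non-free algebra in $\gA$ (a constant, since $\gA$ is fixed and finite). Lemma~\ref{lem:maroti-reduction} gives $\lev(t(\cP)),\lev(c(\cP))<\lev(\cP)$ when all domains are as-components (this is where the absence of a least element is used: $x\mapsto b\cdot x$ is not surjective, so every domain of $t(\cP)$ is a proper subset of the corresponding $\dl(v)$, and subsets of semilattice-free domains stay semilattice free by conservativity). Combining this with the observation that after an as-component exclusion step every domain of a child is an as-component, one gets that $\lev$ strictly drops along every path of length $2$ in the recursion tree, so the depth is at most $2k$. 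Constant depth together with polynomial branching and polynomial work per node (including the polynomially long chains of retractions and domain removals you discuss) yields the polynomial bound; without the constant-depth argument your recurrence does not close.
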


Solving semilattice free instances is polynomial time by \cite{Dalmau06:majority-minority}.
We consider the recursion tree generated by the algorithm. It is easy to
see that at every node of the tree the amount of work done by the
algorithm is bounded by a polynomial, so is the number of recursive calls.
Therefore it suffices to show that the depth of recursion is bounded by a 
constant.

Let $\lev(\cP)$ for an instance $\cP$ of $\CSP(\gA)$ denote the maximal 
size of a semilattice non-free domain of $\cP$.
The following lemma is straightforward.

\begin{lemma}\label{lem:maroti-reduction}
Let $\cP=(V,\dl,\cC)$ be an instance of $\CSP(\gA)$ such that all $\dl(v)$ are 
as-components (and therefore do not have a least element). Let also $p_v$, 
$v\in V$, be consistent maps for $\cP$. Then
$p(\cP),t(\cP),c(\cP)$ are instances of $\CSP(\gA)$, and 
$\lev(t(\cP)),\lev(c(\cP))<\lev(\cP)$;
\end{lemma}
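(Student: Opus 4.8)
The plan is to verify each claim in Lemma~\ref{lem:maroti-reduction} in turn, relying on the explicit definitions of $p(\cP),t(\cP),c(\cP)$ given in Section~\ref{sec:maroti} and on the fact that $\gA$ is closed under subalgebras and retracts. First I would check that $p(\cP),t(\cP),c(\cP)$ are genuine instances of $\CSP(\gA)$: for $p(\cP)$ this follows because the $p_v$ are consistent idempotent maps (after iterating, if necessary), so each $p_v(\dl(v))$ is the universe of a retract of the algebra on $\dl(v)$, hence in $\gA$, and each new constraint relation is the image of an old one under a product of these retractions, hence a subalgebra of the product of the retracts. For $c(\cP)$ each new domain $B_v$ is the subset of $\dl(v)$ consisting of elements that are the target of no semilattice edge; since $\gA$ is conservative this subset is a subalgebra, so $B_v\in\gA$, and the restricted constraint relations are subalgebras of the corresponding products. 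For $t(\cP)$ the variables are pairs $(v,b)$, the domains $b\cdot\dl(v)$ are the images of the unary polynomial $x\mapsto b\cdot x$, hence retracts (using $x\cdot(x\cdot y)=x\cdot y$), so they lie in $\gA$, and each of the two families of constraint relations $\rel_v$ and $\rel_{C,\ba}$ is, up to a coordinate relabelling, the image of a relation already in the class under a product of such polynomials, hence again a subalgebra of the appropriate product.

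Next I would establish the level inequalities $\lev(t(\cP))<\lev(\cP)$ and $\lev(c(\cP))<\lev(\cP)$. The key point, already noted in Section~\ref{sec:maroti}, is the hypothesis that every $\dl(v)$ is an as-component and therefore has no least element. For $c(\cP)$: each domain $B_v$ is a proper subset of $\dl(v)$ whenever $\dl(v)$ contains a semilattice edge (the target of such an edge is excluded), so $|B_v|<|\dl(v)|$ for every semilattice non-free domain, and $B_v=\dl(v)$ for semilattice free ones; in either case $B_v$ is semilattice free — indeed, by Proposition~\ref{pro:uniformity} and the definition of $\cG$, if $b,c\in B_v$ and $bc$ were a semilattice edge this would contradict $c\in B_v$ — so $\lev(c(\cP))=0<\lev(\cP)$ as soon as $\cP$ has any semilattice non-free domain, and if $\cP$ is itself semilattice free the lemma is vacuous for the purpose it is used. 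For $t(\cP)$: the domain $\dl'(v,b)=b\cdot\dl(v)$ has $b$ as its least element (since $b\cdot(b\cdot x)=b\cdot x$ means $b\le b\cdot x$ for all $x$), so if $\dl'(v,b)$ still contains a semilattice edge then it contains a least element, whence it is not an as-component and in fact it is a proper homomorphic-type image that is strictly smaller than $\dl(v)$ — more precisely $|b\cdot\dl(v)|\le|\dl(v)|$ with equality only if $x\mapsto b\cdot x$ is a permutation, which is impossible once $b$ is a strict least element of a set with more than one point; and since every $\dl(v)$ of $\cP$ that is not semilattice free has size $\le\lev(\cP)$, every semilattice non-free domain of $t(\cP)$ has size strictly smaller. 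Collecting these, $\lev(t(\cP)),\lev(c(\cP))<\lev(\cP)$.

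The main obstacle I anticipate is the $t(\cP)$ level bound: one must argue carefully that a domain $b\cdot\dl(v)$ which still carries a semilattice edge is strictly smaller than $\lev(\cP)$, not merely than $|\dl(v)|$. The clean way is to observe that if $b\cdot\dl(v)$ has a semilattice edge then it has a least element $b$, so it is not an as-component; combined with the hypothesis that the original domains are as-components, $b\cdot\dl(v)\ne\dl(v)$, so $x\mapsto b\cdot x$ is non-surjective on $\dl(v)$, giving $|b\cdot\dl(v)|<|\dl(v)|\le\lev(\cP)$ when $\dl(v)$ is itself semilattice non-free; and when $\dl(v)$ is semilattice free, so is every subset including $b\cdot\dl(v)$, contradicting that it carries a semilattice edge. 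So in all cases a semilattice non-free domain of $t(\cP)$ has size $<\lev(\cP)$. The remaining assertions — that the $p_v$ can be taken idempotent by iteration, and that retracts and subalgebras stay in $\gA$ — are routine given the standing assumptions on $\gA$, so the proof is indeed "straightforward" once these observations are in place. \hfill $\Box$

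\begin{proof}
We verify the assertions in order. Throughout we use that $\gA$ is a class of conservative algebras closed under subalgebras and retracts, and that, by hypothesis, every $\dl(v)$ is an as-component and hence has no least element.

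\textbf{The instances lie in $\CSP(\gA)$.} After replacing the consistent maps $p_v$ by idempotent consistent maps obtained by iteration (as noted in Section~\ref{sec:maroti}), each $p_v(\dl(v))$ is the universe of a retract of the algebra on $\dl(v)$, hence lies in $\gA$; and each constraint relation of $p(\cP)$ is the image of a constraint relation of $\cP$ under the product of the maps $p_{v_i}$, hence a subalgebra of the product of the corresponding retracts. For $c(\cP)$, each $B_v$ is the set of $b\in\dl(v)$ that are not the target of any semilattice edge; since $\gA$ is conservative, $B_v$ is a subalgebra, so $B_v\in\gA$, and the restricted constraint relations are subalgebras of the restricted products. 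For $t(\cP)$, the domain $\dl'(v,b)=b\cdot\dl(v)$ is the image of the unary polynomial $x\mapsto b\cdot x$, which is idempotent by the identity $x\cdot(x\cdot y)=x\cdot y$ assumed for $\cdot$; hence $b\cdot\dl(v)$ is the universe of a retract and lies in $\gA$. Each relation $\rel_v$ and each $\rel_{C,\ba}$ is, up to a relabelling of coordinates, the image of a relation already available (for $\rel_v$, of the graph of the binary operation $\cdot$ restricted to $\dl(v)$; for $\rel_{C,\ba}$, of $\rel$) under a product of such polynomials, hence a subalgebra of the product of the new domains. Thus $p(\cP),t(\cP),c(\cP)$ are instances of $\CSP(\gA)$.

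\textbf{The level inequalities.} If $\cP$ is semilattice free, then $\lev(\cP)=0$ cannot be decreased, but in that case the lemma is not needed and there is nothing to prove; so assume $\lev(\cP)>0$.

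Consider $t(\cP)$ and any domain $\dl'(v,b)=b\cdot\dl(v)$. Since $b\cdot(b\cdot x)=b\cdot x$, we have $b\le b\cdot x$ for every $x\in\dl(v)$, i.e.\ $b$ is a least element of $b\cdot\dl(v)$. Suppose $b\cdot\dl(v)$ is semilattice non-free. If $\dl(v)$ itself is semilattice free, then so is every subset of $\dl(v)$, and in particular $b\cdot\dl(v)$, a contradiction; hence $\dl(v)$ is semilattice non-free, so $|\dl(v)|\le\lev(\cP)$. Moreover $b\cdot\dl(v)$ has a least element, so it is not an as-component; since $\dl(v)$ is an as-component, $b\cdot\dl(v)\ne\dl(v)$, hence $x\mapsto b\cdot x$ is non-surjective on $\dl(v)$ and $|b\cdot\dl(v)|<|\dl(v)|\le\lev(\cP)$. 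Therefore every semilattice non-free domain of $t(\cP)$ has size strictly less than $\lev(\cP)$, i.e.\ $\lev(t(\cP))<\lev(\cP)$.

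Now consider $c(\cP)$. Each domain is $B_v$, the set of $b\in\dl(v)$ that are the target of no semilattice edge within $\dl(v)$. If $b,c\in B_v$ and $bc$ were a semilattice edge, then $c$ would be the target of a semilattice edge, contradicting $c\in B_v$; hence $B_v$ is semilattice free for every $v\in V$. Consequently $\lev(c(\cP))=0<\lev(\cP)$. (In fact, if $\dl(v)$ is semilattice non-free, the target of a semilattice edge in $\dl(v)$ is excluded from $B_v$, so $|B_v|<|\dl(v)|$; this is not needed for the level bound but confirms that $c(\cP)$ is a genuine reduction on such domains.)

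Combining the two cases, $\lev(t(\cP)),\lev(c(\cP))<\lev(\cP)$, as claimed.
\end{proof}
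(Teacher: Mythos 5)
The paper itself gives no proof of this lemma (it is declared ``straightforward'', and the properties of the construction in Section~\ref{sec:maroti} are explicitly borrowed from Maroti's manuscript), so your proposal has to stand on its own. The parts of your argument concerning $c(\cP)$, the membership of the new domains in $\gA$, and the two level inequalities are correct: subsets of conservative algebras are subalgebras, $B_v$ carries no semilattice edge, $b\cdot\dl(v)$ contains $b$ as a least element and so cannot coincide with the as-component $\dl(v)$, and your caveat that one must assume $\lev(\cP)>0$ is the right reading of the statement.

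The genuine gap is in your claim that the constraint relations of $t(\cP)$ (and of $p(\cP)$) are subalgebras, which you justify by saying each is ``the image of a relation already available under a product of such polynomials, hence a subalgebra''. That inference is not valid: unary polynomials involve constants and do not preserve subalgebras, so the image of a subalgebra under a tuple of polynomials is in general not a subuniverse. For $\rel_{C,\ba}=\{\ba\cdot\bx\mid\bx\in\rel\}$ the conclusion can be rescued, but only by an argument you do not give: since $\ba\in\rel$, the polynomial $q(\bx)=\ba\cdot\bx$ is idempotent and maps $\rel$ into $\rel$, and conservativity forces any term operation applied to tuples of $q(\rel)$ to return, in each coordinate, one of its arguments, hence a value fixed by $q$; so the result stays in $q(\rel)$. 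For $\rel_v=\{(b_1\cdot c\zd b_k\cdot c)\mid c\in\dl(v)\}$ no such rescue is available, because the tuple $(b_1\zd b_k)$ does not belong to the diagonal whose image you are taking, and closure can actually fail for the canonical operation $\cdot$: take a conservative algebra on $\{0,1,2,3\}$ whose graph has semilattice edges $0\le 2$ and $1\le 3$ and all other pairs majority (such an algebra exists, e.g.\ with the dual discriminator as majority witness, and then the canonical $\cdot$ is forced). With the enumeration $0,1,2,3$, the tuples of $\rel_v$ corresponding to $c=2$ and $c=3$ are $(2,1,2,3)$ and $(0,3,2,3)$, and applying $\cdot$ coordinatewise gives $(2,3,2,3)$, which is not of the form $(0\cdot e,1\cdot e,2\cdot e,3\cdot e)$ for any $e$ (the first coordinate forces $e=2$, the second forces $e=3$). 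So the invariance of $\rel_v$ cannot be obtained from your general principle; at this point one must fall back on the exact formulation and proof in Maroti's manuscript, as the paper does, or modify the construction (e.g.\ argue about the subalgebra generated by $\rel_v$). Similarly, for $p(\cP)$ the maps $p_v$ are merely consistent maps, not polynomials, so images under them need not be subalgebras; the fix is easy --- use the paper's definition of $p(\cP)$ as the restriction of each constraint relation to the products of the images, which is a subalgebra by conservativity --- but as written your justification proves the wrong thing.
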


We use the following observation:
\begin{quote}
Suppose there is a constant $c$ such that for any problems $\cP'$ and $\cP''$ such that
$\cP''$ is a successor of $\cP'$ in the recursion tree and the length of the path
from $\cP'$ to $\cP''$ is at least $c$, then $\lev(\cP'')<\lev(\cP')$.
Then the recursion tree
has depth at most $c\cdot k$ where $k$ is the maximal size of a semilattice non-free
algebra in $\gA$.
\end{quote}

We show that the algorithm satisfies the condition above for $c=2$.  Let 
$\cP'$ be the problem being solved at some node of the recursion tree. 
Suppose first that all the domains of $\cP'$ are semilattice-free. Then $\cP'$ 
has no successors and there is nothing to prove. Next, suppose that some 
domain is not an as-component. Then every child of $\cP'$ is of the form
$\cP_{I_j}$ for some strand $I_j$. Every domain in a problem like this is
an as-component. Note, however, that the size of at least some domains 
may not decrease at this step, if those domains are already as-components.
Finally, suppose that all domains of $\cP'$ are as-components. Then every 
child of $\cP'$ has the form $c(\cP')$, $t(\cP')$, or $\cP'_{v,d}=
t(\cP')\cup\{\ang{(v,d),d}\}$. By Lemma~\ref{lem:maroti-reduction} the maximal 
size of semilattice non-free domain of each of these problems is strictly less 
than that of $\cP'$.

\end{document}